\documentclass[10pt,conference,romanappendices]{IEEEtran}
\usepackage[cmex10]{amsmath}
\usepackage{amsfonts}
\usepackage{amssymb}
\usepackage{wrapfig}
\usepackage{graphicx}
\usepackage{enumerate}
\usepackage{multirow}
\usepackage{xspace}
\usepackage{cancel}
\usepackage{xcolor,colortbl}

\newtheorem{example}{Example}
\newtheorem{definition}{Definition}
\newtheorem{proposition}{Proposition}
\newenvironment{proof}{\hspace{8pt}\ti{Proof:}}{$\blacksquare$}

\newtheorem{remark}{Remark}

\newcommand{\prd}{\ti{PrvRed}\xspace}
\newcommand{\lrn}{\ti{Lrn}\xspace}
\newcommand{\del}{\,\|\,}

\newcommand{\apqe}{\mbox{$\mi{DS}$-$\mi{PQE}$}\xspace}

\newcommand{\Apqe}{\mbox{$\mi{DS}$-$\mi{PQE}^+$}\xspace}

\newcommand{\olds}[4]{\mbox{$(\prob{#1}{#2},\pnt{#3})~\rightarrow #4$}}

\newcommand{\Ods}[5]{\mbox{$(\prob{#1}{#2},\pnt{#3},#4)~\rightarrow #5$}}

\newcommand{\ods}[3]{\mbox{$(\pnt{#1},#2)~\rightarrow #3$}}

\newcommand{\oDs}[3]{\mbox{$(#1,#2)~\rightarrow #3$}}
\newcommand{\noDs}[3]{$(#1,#2)~\rightarrow #3$}

\newcommand{\bm}[1]{{\mbox{\boldmath $#1$}}}

\newcommand{\ks}{\mbox{$\xi$}\xspace}
\newcommand{\di}[1]{\mbox{$\mi{Diam}(#1)$}\xspace}

\newcommand{\pnt}[1]{{\mbox{$\vec{#1}$}}}
\newcommand{\ppnt}[2]{{\mbox{$\vec{#1}_{#2}$}}}
\newcommand{\cof}[2]{\mbox{$#1_{\vec{#2}}$}}

\newcommand{\V}[1]{\mbox{$\mathit{Vars}(#1)$}}
\newcommand{\Va}[1]{\mbox{$\mi{Vars}(\vec{#1})$}}

\newcommand{\s}[1]{\mbox{$\{#1\}$}}

\newcommand{\nGz}[2]{$G_{non-\{z\}}$}

\newcommand{\prr}[1]{\mi{Prev}(\boldsymbol{q})}

\newcommand{\mi}[1]{\mathit{#1}}
\newcommand{\ti}[1]{\textit{#1}}
\newcommand{\tb}[1]{\textbf{#1}}

\newcommand{\ttt}{\>\>\>}
\newcommand{\tttt}{\>\>\>\>}

\newcommand{\Tt}{\>\>}
\newcommand{\Sub}[2]{\mbox{$\mi{#1}_\mi{#2}$}}

\newcommand{\Sup}[2]{\mbox{$#1^\mi{#2}$}}

\newcommand{\prob}[2]{\mbox{$\exists{#1} [#2]$}}

\newcommand{\ecnf}{\ensuremath{\exists\mathrm{CNF}}\xspace}
\newcommand{\Comment}[1]{}

\newcommand{\pqe}[4]{$#2 \wedge \prob{#1}{#4} \equiv \prob{#1}{#3 \wedge #4}$}

\begin{document}

\title{Partial Quantifier Elimination With Learning}


\author{\IEEEauthorblockN{Eugene Goldberg} 
\IEEEauthorblockA{
email:
eu.goldberg@gmail.com}}


\maketitle

\begin{abstract}
  We consider a modification of the Quantifier Elimination (QE)
  problem called Partial QE (PQE).  In PQE, only a small part of the
  formula is taken out of the scope of quantifiers. The appeal of PQE
  is that many verification problems, e.g. equivalence checking and
  model checking, reduce to PQE and the latter is much easier than
  complete QE. Earlier, we introduced a PQE algorithm based on the
  machinery of D-sequents. A D-sequent is a record stating that a
  clause is \textit{redundant} in a quantified CNF formula in a
  specified subspace. To make this algorithm efficient, it is
  important to reuse learned D-sequents. However, reusing D-sequents
  is not as easy as conflict clauses in SAT-solvers because redundancy
  is a \textit{structural} rather than a semantic property.
  In~\cite{qe_learn}, we modified the definition of D-sequents to
  enable their \textit{safe} reusing. In this paper, we present a PQE
  algorithm based on new D-sequents. It is different from its
  predecessor in two aspects. First, the new algorithm can learn and
  reuse D-sequents. Second, it proves clauses redundant one by one and
  thus backtracks as soon as the current target clause is proved
  redundant in the current subspace. This makes the new PQE algorithm
  similar to a SAT-solver that backtracks as soon as just \ti{one}
  clause is falsified.  We show experimentally that the new PQE
  algorithm outperforms its predecessor.
\end{abstract}

\section{Introduction}
Many verification problems reduce to Quantifier Elimination
(\tb{QE}). So, any progress in QE is of great importance. In this
paper, we consider propositional CNF formulas with existential
quantifiers.  Given formula \prob{X}{F(X,Y)} where $X$ and $Y$ are
sets of variables, the QE problem is to find a quantifier-free formula
$F^*(Y)$ such that $F^* \equiv \prob{X}{F}$. Building a practical QE
algorithm is a tall order.  In addition to the sheer complexity of QE,
a major obstacle here is that the size of formula $F^*(Y)$ can be
prohibitively large.

There are at least two ways of making QE easier to solve. First, one
can consider only instances of QE where $|Y|$ is small, which limits
the size of $F^*$. In particular, if $|Y|=0$, QE reduces to the
satisfiability problem (SAT). This line of research featuring very
efficient methods of model checking based on SAT~\cite{bmc,ken03,ic3}
has gained great popularity. Another way to address the complexity of
QE suggested in~\cite{hvc-14} is to perform \tb{partial QE}
(\tb{PQE}). Given formula \prob{X}{F_1(X,Y) \wedge F_2(X,Y)}, the PQE
problem is to find a quantifier-free formula $F^*_1(Y)$ such that
\pqe{X}{F^*_1}{F_1}{F_2}. We will say that formula $F^*_1$ is obtained
by \ti{taking} $F_1$ \ti{out of the scope of quantifiers}.

The appeal of PQE is threefold. First, intuitively, PQE should be much
simpler than QE if $F_1$ is much smaller than $F_2$. Second, PQE can
perform SAT. So one can view a PQE-solver as a SAT-solver with extra
semantic power due to using quantifiers.  Third, in addition to SAT,
many verification problems, reduce to PQE (see
Section~\ref{sec:appl_pqe}).  For instance, an equivalence checker
based on PQE~\cite{fmcad16} enables construction of short resolution
proofs of equivalence for a very broad class of structurally similar
circuits. These proofs are based on the notion of clause
redundancy\footnote{A clause is a disjunction of literals. So a CNF formula $F$ is a
conjunction of clauses: $C_1 \wedge \dots \wedge C_k$. We also
consider $F$ as the \ti{set} of clauses \s{C_1,\dots,C_k}. Clause $C$
is redundant in \prob{X}{F} if
$\prob{X}{F} \equiv \prob{X}{F \setminus \s{C}}$.
} in a \ti{quantified} formula
and thus cannot be generated by a traditional SAT-solver.
In~\cite{mc_no_inv}, we show that a PQE-solver can check if the
reachability diameter exceeds a specified value. So it can turn
bounded model checking~\cite{bmc} into unbounded as opposed to a pure
SAT-solver. Importantly, no generation of an \ti{inductive invariant}
is required by the method of~\cite{mc_no_inv}.

If $F^*_1(Y)$ is a solution to the PQE problem above, it is implied by
$F_1 \wedge F_2$. So $F^*_1$ can be obtained by resolving clauses of
$F_1 \wedge F_2$. However, a PQE-solver based on resolution \ti{alone}
cannot efficiently address the following ``termination
problem''. Suppose one builds $F^*_1$ incrementally, adding one clause
at a time. When can one terminate this procedure claiming that $F^*$
is a solution to the PQE problem (and so \pqe{X}{F^*_1}{F_1}{F_2})?
The inability of resolution to address the termination problem stems
from its ``asymmetry'' in treating satisfiable and unsatisfiable
formulas. To prove formula $G$ unsatisfiable, one just needs to add to
$G$ new resolvents until an empty clause is derived. However, proving
$G$ \ti{satisfiable} requires reaching a saturation point where every
new resolvent is implied by a clause of $G$.

In \cite{fmcad12,fmcad13} we approached the termination problem above
using the following observation.  Assume for the sake of simplicity
that every clause of $F_1$ contains at least one variable of
$X$. Then, if $F^*_1$ is a solution, $F_1$ can be dropped from $F^*_1
\wedge \prob{X}{F_1 \wedge F_2}$. Thus, $F^*_1$ becomes a solution as
soon as it makes the clauses of $F_1$ \ti{redundant}. The ability of
redundancy-based reasoning to handle the termination problem is rooted
in the fact that such reasoning enables treating satisfiable and
unsatisfiable formulas in a symmetric way (see
Section~\ref{sec:asymm}).

In \cite{hvc-14}, we introduced a PQE-solver called \apqe based on the
notion of redundancy (DS stands for ``D-Sequent''). \apqe is a
branching algorithm that, in addition to deriving new clauses and
conjoining them with $F_1 \wedge F_2$, generates dependency sequents
(\ti{D-sequents}). A D-sequent is a record saying that a clause is
redundant in a specified subspace. \apqe branches until proving
redundancy\footnote{By ''proving a clause $C$ redundant'' we mean ``\ti{making} $C$
redundant by adding new clauses (if necessary) and then proving $C$
redundant''.
} of target clauses becomes
trivial at which point so-called ``atomic'' D-sequents are
generated. The D-sequents of different branches are merged using a
resolution-like operation called \ti{join}. Upon completing the search
tree, \apqe derives D-sequents stating redundancy of the clauses of
$F_1$.

\apqe has two flaws.  First, \apqe employs ``multi-event''
backtracking.  Namely, it backtracks only when \ti{all} clauses of
$F_1$ are proved redundant in the current subspace. (This is different
from a SAT-solver that backtracks as soon as \ti{just one} clause of
the formula is falsified.)  The intuition here is that multi-event
backtracking leads to building very deep and thus very large search
trees. Second, \apqe does not reuse D-sequents derived in different
branches. The problem here is that redundancy is a \ti{structural}
rather than a \ti{semantic} property. So, a clause redundant in
formula $G'$ may not be redundant in $G''$ logically equivalent to
$G'$ (whereas a semantic property holds for \ti{all} equivalent
formulas). So, reusing a D-sequent is not as easy as reusing a clause
learned by a SAT-solver.

In this paper, we address both flaws of \apqe.  First, we present a
new PQE algorithm called \Apqe that employs \ti{single-event
  backtracking}. At any given moment, \Apqe proves redundancy of only
one clause. Once this goal is achieved, it picks a new clause to prove
redundant.  Second, \Apqe uses a new definition of D-sequents
introduced in~\cite{qe_learn}. This definition facilitates safe
reusing of D-sequents. We show experimentally that \Apqe is
significantly faster than \apqe.

This main body of the paper\footnote{Some additional information is
  provided in Appendices.} is structured as follows. Basic definitions
are given in Section~\ref{sec:basic}.  Section~\ref{sec:appl_pqe}
lists some verification problems that reduce to PQE. In
Section~\ref{sec:simp_exmp}, we give a simple example of using the
machinery of D-sequents in PQE. Section~\ref{sec:asymm} explains how
to use the notion of redundancy to treat satisfiable and unsatisfiable
formulas ``symmetrically''. In Section~\ref{sec:bnd_pnts}, we describe
the semantics of clause redundancy in terms of boundary points.
Sections~\ref{sec:dseqs}-\ref{sec:join_upd} present the machinery of
D-sequents. Section~\ref{sec:exper} provides experimental
results. Some background is given in Section~\ref{sec:bg}. Finally, in
Sections~\ref{sec:concl} and~\ref{sec:fut_dir}, we make conclusions
and describe directions for future research.

\section{Basic Definitions}
\label{sec:basic}

In this paper, we consider only propositional CNF formulas. In the
sequel, when we say ``formula'' without mentioning quantifiers we mean
a \ti{quantifier-free CNF} formula.
%
%
\begin{definition}
 \label{def:ecnf}
  Let $F$ be a CNF formula and $X$ be a subset of variables of $F$. We
  will refer to \prob{X}{F} as an \bm{\ecnf} \tb{formula}.
\end{definition}
%
%
\begin{definition}
  \label{def:vars}
Let $F$ be a CNF formula. \bm{\V{F}} denotes the set of variables of
$F$ and \bm{\V{\prob{X}{F}}} denotes $\V{F} \setminus X$.
\end{definition}

%
%
\begin{definition}
Let $V$ be a set of variables. An \tb{assignment} \pnt{q} to $V$ is a
mapping $V'~\rightarrow \s{0,1}$ where $V' \subseteq V$. We will
denote the set of variables assigned in \pnt{q} as \bm{\Va{q}}. We
will denote as \bm{\pnt{q} \subseteq \pnt{r}} the fact that a) $\Va{q}
\subseteq \Va{r}$ and b) every variable of \Va{q} has the same value in
\pnt{q} and \pnt{r}.
\end{definition}
%
%
\begin{definition}
\label{def:cofactor}
Let $C$ be a clause, $H$ be a formula that may have quantifiers, and
\pnt{q} be an assignment. \bm{\cof{C}{q} \equiv 1} if $C$ is satisfied
by \pnt{q}; otherwise it is the clause obtained from $C$ by removing
all literals falsified by \pnt{q}. \bm{\cof{H}{q}} denotes the formula
obtained from $H$ by replacing every clause $C$ with \cof{C}{q}.
\end{definition}
%
%
\begin{definition}
\label{def:formula-equiv}
Let $G, H$ be formulas that may have quantifiers. We say that $G, H$
are \tb{equivalent}, written \bm{G \equiv H}, if for all assignments
\pnt{q} where $\Va{q} \supseteq$ $(\V{G} \cup \V{H})$, we have
$\cof{G}{q} = \cof{H}{q}$.
\end{definition}
%
%
\begin{definition}
 \label{def:qe_prob}
The \tb{Quantifier Elimination (QE)} problem for formula
\prob{X}{F(X,Y)} is to find a formula $F^*(Y)$ such that \bm{F^*
  \equiv \prob{X}{F}}.
\end{definition}

%
%
\begin{definition}
 \label{def:pqe_prob}
 The \tb{Partial QE} (\tb{PQE}) problem of taking $F_1$ out of the
 scope of quantifiers in \prob{X}{F_1(X,Y) \wedge F_2(X,Y)} is to find
 formula $F^*_1(Y)$ such that \bm{F^*_1 \wedge \prob{X}{F_2} \equiv
   \prob{X}{F_1 \wedge F_2}}.
\end{definition}

%
%
\begin{remark}
\label{rem:XYsets}
From now on, we will use $X$ and $Y$ to denote sets of quantified and
non-quantified variables respectively. We will assume that variables
denoted by $x_i$ and $y_i$ are in $X$ and $Y$ respectively. Using
$X,Y$ in a quantifier-free formula implies that in the context of
QE/PQE, $X$ and $Y$ specify the quantified and non-quantified
variables respectively.
\end{remark}
%
%
\begin{definition}
  \label{def:Xcls}
Let \prob{X}{F(X,Y)} be an \ecnf formula.  A clause $C$ of $F$ is
called an \bm{X}\tb{-clause} if \V{C} $\cap~X~\neq~\emptyset$.
\end{definition}
%
%
\begin{definition}
\label{def:red_cls}
Let $F$ be a CNF formula and $G \subseteq F$ and $G \neq
\emptyset$. The clauses of $G$ are \textbf{redundant in} \bm{F} if $F
\equiv (F \setminus G)$.  The clauses of $G$ are \textbf{redundant in}
\bm{\prob{X}{F}} if $\prob{X}{F} \equiv \prob{X}{F \setminus G}$. Note
that $F \equiv (F \setminus G)$ implies $\prob{X}{F} \equiv \prob{X}{F
  \setminus G}$ but the opposite is not true.
\end{definition}

\section{Some applications of PQE}
\label{sec:appl_pqe}
In this section, we justify our interest in PQE by listing some
verification problems that can be solved by a PQE algorithm.

%
%
\subsection{Circuit-SAT}
\label{ssec:circ_sat}
Let $N(X,Y,Z)$ be a combinational circuit where $X$,$Y$,$Z$ are sets
of input, internal and output variables respectively. Let \pnt{z} be
an assignment\footnote{In this section, when we say ``an assignment
  \pnt{v} to a set of variables $V$'' we mean a full assignment (i.e.
  every variable of $V$ is assigned in \pnt{v}).} to $Z$. Consider the
problem\footnote{This problem reduces to Circuit-SAT if $Z=\s{z}$,
  $\pnt{z}=(z=1)$ and it suffices to produce just one input (if any)
  for which $N$ outputs \pnt{z}.} of finding inputs (i.e. assignments
to $X$) for which $N$ produces output \pnt{z}.
Let $F(X,Y,Z)$ be a formula specifying $N$.  Let \Sub{C}{z} denote the
longest clause falsified by \pnt{z}.  The problem above reduces to
taking \Sub{C}{z} out of the scope of quantifiers in
\prob{W}{\Sub{C}{z} \wedge F} where $W = Y \cup Z$ (see
~\cite{south_korea}).  Namely, one needs to find a formula $G(X)$ such
that $\prob{W}{\Sub{C}{z} \wedge F} \equiv$ $G \wedge
\prob{W}{F}$. Every input \pnt{x} falsifying $G$ produces the output
\pnt{z}.
%
%
\subsection{General SAT}
Let $F(X)$ be a formula to be checked for satisfiability and \pnt{x}
be an assignment to $X$. Let $F_1$ and $F_2$ denote the clauses of $F$
satisfied and falsified by \pnt{x} respectively. Then checking the
satisfiability of $F$ reduces to taking $F_1$ out of the scope of
quantifiers in \prob{X}{F_1 \wedge F_2} (see \cite{hvc-14}). That is
one just needs to find $F^*_1$ such that $\prob{X}{F_1 \wedge F_2}
\equiv F^*_1 \wedge \prob{X}{F_2}$. Since all variables of $F$ are
quantified, $F^*_1$ is a constant. If $F^*_1 = \mi{false}$, $F$ is
unsatisfiable because \prob{X}{F} $\equiv$ \prob{X}{F_1 \wedge F_2}
$\equiv \mi{false}$. If $F^*_1=\mi{true}$, $F$ is satisfiable (because
$F_2$ is satisfied by \pnt{x}).
%
%
\subsection{Interpolation}
Let $I(Y)$ be an interpolant for formulas $A(X,Y)$ and
$\overline{B}(Y,Z)$ i.e $A \Rightarrow I \Rightarrow
\overline{B}$. Let formula $A^*(Y)$ be obtained by taking $A$ out of
the scope of quantifiers in \prob{W}{A \wedge B} where $W = X \cup Z$.
That is $\prob{W}{A \wedge B} \equiv A^*\wedge \prob{W}{B}$. Assume
also that $A \Rightarrow A^*$. Then $A \Rightarrow A^* \Rightarrow
\overline{B}$ and $A^*$ is an interpolant~\cite{tech_rep_ec_lor}. So,
one can view interpolation as a special case of PQE.
%
%
\subsection{Equivalence checking}
\label{ssec:pqe_ec}
Let $N'(X',Y',z')$ and $N''(X'',Y'',z'')$ be single-output
combinational circuits to be checked for equivalence.  Here $X',Y'$
are sets of input and internal variables and $z'$ is the output
variable of $N'$. Sets $X''$ and $Y''$ and variable $z''$ have the
same meaning for $N''$. Let $\mi{EQ}(X',X'')$ specify the predicate
such that $\mi{EQ}(\pnt{x'},\pnt{x''})$ iff $\pnt{x'} = \pnt{x''}$.
Let formulas $F'(X',Y',z')$ and $F''(X'',Y'',z'')$ specify circuits
$N'$ and $N''$ respectively.

The equivalence of $N'$ and $N''$ can be checked by taking $\mi{EQ}$
from the scope of quantifiers in \prob{W}{\mi{EQ} \wedge F' \wedge
  F''} where $W = X' \cup X'' \cup Y' \cup Y''$
(see~\cite{fmcad16}). Let $h(z',z'')$ be a formula such that
\prob{W}{\mi{EQ} \wedge F' \wedge F''} $\equiv$ $h \wedge \prob{W}{F'
  \wedge F''}$.  If $h(z',z'')$ specifies $z' \equiv z''$, then $N'$
and $N''$ are equivalent. Otherwise, $N'$ and $N''$ are inequivalent
unless they implement identical constants. (This possibility can be
ruled out by a few easy SAT-checks.) 
%
%
\subsection{Model checking}
Let formulas $T(S,S')$ and $I(S)$ specify the transition relation and
initial states of a system \ks respectively. Here $S$ and $S'$ are
sets of variables specifying the present and next states
respectively. Let \di{I,T} denote the \ti{reachability diameter} of
\ks (i.e. every state of \ks is reachable in at most \di{I,T}
transitions).

Given a number $n$, one can use a PQE solver to check if $n \ge
\di{I,T}$ as follows\,\cite{mc_no_inv}.  Let $I_1=I(S_1)$ and $W_n=S_0
\cup \dots \cup S_n$ and $G_{0,n}=T_{0,1} \wedge \dots \wedge
T_{n-1,n}$ and $T_{i,i+1}=T(S_i,S_{i+1})$. Testing if $n \ge \di{I,T}$
reduces to checking if $I_1$ is redundant in \prob{W_n}{I_1 \wedge I_0
  \wedge G_{0,n+1}} i.e. whether \prob{W_n}{I_1 \wedge I_0 \wedge
  G_{0,n+1}} $\equiv$ \prob{W_n}{I_0 \wedge G_{0,n+1}}. If so, then $n
\ge \di{I,T}$. Then, to prove a safety property $P(S)$, it suffices to
run BMC~\cite{bmc} to show that no counterexample of length $n$ or
less exists.

\section{A Simple Example}
\label{sec:simp_exmp}
In this section, we present a simple example of performing PQE by
deriving D-sequents.  A D-sequent of~\cite{fmcad13} is a
record \olds{X}{F}{q}{C} stating redundancy of clause $C$ in
\prob{X}{F} in subspace \pnt{q} (where \pnt{q} is an assignment to
variables of $F$).
Let \prob{X}{C_1 \wedge G} be a formula where $X = \s{x_1,x_2}$, $C_1
= \overline{x}_1 \vee x_2$, $G = C_2 \wedge C_3$, $C_2 = y \vee x_1$,
$C_3 = y \vee \overline{x}_2$.  Consider the PQE problem of taking
$C_1$ out of the scope of quantifiers. Below we solve this problem by
proving $C_1$ redundant.

In subspace $y\!=\!0$, clauses $C_2,C_3$ are \tb{unit} (i.e.  one
literal is unassigned, the rest are falsified).  After assigning
\mbox{$x_1\!=\!1$}, $x_2\!=\!0$ to satisfy $C_2,C_3$, the clause $C_1$ is
falsified. Using the standard conflict analysis~\cite{grasp} one
derives a conflict clause $C_4 = y$. Adding $C_4$ to $C_1 \wedge G$
makes $C_1$ redundant in subspace $y=0$.  So the D-sequent $S'$ equal
to \olds{X}{F}{q'}{C_1} holds where $F = C_1 \wedge G \wedge C_4$ and
$\pnt{q'}=(y=0)$.

In subspace $y=1$, the clause $C_1$ is ``blocked'' at $x_1$. That is
no clause of $F$ is resolvable with $C_1$ on $x_1$ in subspace $y=1$
because $C_2$ is satisfied by $y=1$ (see
Subsection~\ref{ssec:third_kind}). So $C_1$ is redundant in
formula \prob{X}{F} and the D-sequent $S''$ equal
to \olds{X}{F}{q''}{C_1} holds where $\pnt{q''} = (y=1)$. D-sequents
$S'$ and $S''$ are examples of so-called atomic D-sequents. They are
derived when proving clause redundancy is trivial (see
Section~\ref{sec:atom_dseqs}). One can produce a new D-sequent
\olds{X}{F}{q}{C_1} where $\pnt{q} = \emptyset$ by ``joining'' $S'$
and $S''$ at $y$ (see Subsection~\ref{ssec:join}). This D-sequent
states the \ti{unconditional} redundancy of $C_1$ in \prob{X}{F}. So,
$C_4 \wedge \prob{X}{G} \equiv$ \prob{X}{C_1 \wedge G \wedge
C_4}. Since $C_1 \wedge G$ implies $C_4$, then
$C_4 \wedge \prob{X}{G} \equiv$ \prob{X}{C_1 \wedge G}. So $C_4$ is a
solution to our PQE problem.

\section{Redundancy And SAT/UNSAT Symmetry}
\label{sec:asymm}
As mentioned earlier, there is an obvious asymmetry in how pure
resolution treats satisfiable and unsatisfiable formulas. Namely,
resolution cannot efficiently solve satisfiable formulas. In the
SAT-solvers based on the DPLL procedure~\cite{dpll}, this problem is
addressed by building a search tree where each branch corresponds to an
assignment. For a satisfiable formula, the search terminates as soon
as a branch specifying a satisfying assignment is found.

Note that the DPLL procedure does not eliminate the asymmetry in
treating satisfiable and unsatisfiable formulas. It just simplifies
proving a formula satisfiable (by finding a satisfying assignment).
However, this does not work well when one has to enumerate \ti{many}
satisfying assignments. Consider, for instance, the QE problem of
finding $F^*(Y)$ logically equivalent to \prob{X}{F(X,Y)}. Suppose one
builds $F^*(Y)$ by a DPLL-like procedure. In the worst case, this
requires finding a satisfying assignment for every assignment \pnt{y}
to $Y$ for which $F$ is satisfiable.

One can use the notion of redundancy to recover the symmetry between
satisfiable and unsatisfiable formulas. Consider, for instance, the QE
problem above. Let $F$ be unsatisfiable in subspace \pnt{y}.  Then to
make the $X$-clauses of $F$ redundant in this subspace one needs to
add a clause $C(Y)$ (implied by $F$) that is falsified by \pnt{y}.  If
$F$ is satisfiable in subspace \pnt{y}, all $X$-clauses are already
redundant in this subspace and hence no clause needs to be added. So,
the only difference between SAT and UNSAT cases is that in the UNSAT
case one has to add add a clause to make target $X$-clauses redundant.

\section{Proving Clause Redundancy}
\label{sec:bnd_pnts}
Let $F(X)$ be a quantifier-free formula. Proving redundancy of a
clause $C \in F$ reduces to checking if $F \setminus \s{C}$ implies
$C$.  So, in this case, a redundancy check is straightforward and
reduces to SAT.  Now consider proving redundancy of $C \in F$ in
formula \prob{X}{F(X,Y)}. If $\V{C} \subseteq Y$, then the redundancy
check is still the same as above. However, the situation changes if
$C$ is an $X$-clause. The fact that an $X$-clause is redundant in
\prob{X}{F} does not mean that it is redundant in $F$ as well.

In~\cite{tech_rep_edpll}, to address the problem of proving clause
redundancy, we developed a machinery of boundary points.  Given a
formula \prob{X}{F} and a clause $C \in F$, a boundary point is a full
assignment (\pnt{x},\pnt{y}) to $X \cup Y$ that falsifies $C$ but
satisfies $F \setminus \s{C}$.  This boundary point is called
removable if there is a clause $B(Y)$ implied by $F$ that is falsified
by (\pnt{x},\pnt{y}). Adding $B$ to $F$ eliminates (\pnt{x},\pnt{y})
as a boundary point (because it does not satisfy $F \setminus \s{C}$
anymore). An $X$-clause $C$ is redundant in \prob{X}{F} if no
removable boundary point exists.

\section{Dependency Sequents (D-sequents)}
\label{sec:dseqs}
In~\cite{fmcad13}, we introduced a machinery of D-sequents meant for
dealing with quantified formulas. It can be viewed as an extension of
resolution that facilitates treating satisfiable and unsatisfiable
formulas in a symmetric way (see Section~\ref{sec:asymm}).  In this
section, we modify the definition of D-sequents introduced
in~\cite{fmcad13}.  In Subsection~\ref{ssec:old_dseq}, we explain the
reason for such a modification. The new definition is given in
Subsection~\ref{ssec:def_dseqs}.

%
%
\subsection{Motivating example}
\label{ssec:old_dseq}
Let formula \prob{X}{F} contain two identical $X$-clauses $C$ and
$B$. The presence of $C$ makes $B$ redundant and vice versa. So,
D-sequents \olds{X}{F}{q}{C} and \olds{X}{F}{q}{B} hold where $\pnt{q}
= \emptyset$.  Denote them as $S_C$ and $S_B$ respectively. (Here, we
use the old definition of D-sequents given in~\cite{fmcad13}.)  $S_C$
and $S_B$ state that $C$ and $B$ are redundant in \prob{X}{F}
\ti{individually}.  Using $S_C$ and $S_B$ \ti{together} (to remove
\ti{both} $B$ and $C$ from \prob{X}{F}) is incorrect because it
involves circular reasoning.

The problem here is that redundancy is a \ti{structural} property. So,
the redundancy of $B$ in \prob{X}{F} does not imply that of $B$ in
\prob{X}{F \setminus \s{C}} even though $F \equiv F \setminus
\s{C}$. The definition of a D-sequent given in~\cite{fmcad13} does not
help to address the problem above. This definition states redundancy
of a clause only with respect to formula \prob{X}{F}. (This makes it
hard to reuse D-sequents and is the reason why the PQE-solver
introduced in~\cite{fmcad13} does not reuse D-sequents). We address
this problem by adding a \ti{structural constraint} to the definition
of a D-sequent. It specifies a \ti{subset of formulas} where a
D-sequent holds and so helps to avoid using this D-sequent in
situations where it \ti{may not hold}. Adding structural constraints
to D-sequents $S_C$ and $S_B$ makes them mutually exclusive (see
Example~\ref{exmp:mut_excl} below).

%
%
\subsection{Definition of D-sequents}
\label{ssec:def_dseqs}
%
%
\begin{definition}
\label{def:dseq}
Let \prob{X}{F} be an \ecnf formula and \pnt{q}\, be an assignment to
\V{F}. Let $C$ be an $X$-clause of $F$ and $H$ be a subset of $F
\setminus \s{C}$.  A dependency sequent (\tb{D-sequent}) $S$ has the
form \Ods{X}{F}{q}{H}{C}. It states that clause \cof{C}{q} is
redundant in every formula \prob{X}{\cof{W}{q}} logically equivalent
to \prob{X}{\cof{F}{q}} where $H \cup \s{C} \subseteq W \subseteq F$.
\end{definition}
%
%
\begin{definition}
\label{def:aux_terms}
 The assignment \pnt{q} and formula $H$ above are called the
 \tb{conditional} and the \tb{structure constraint} of the D-sequent
 $S$ respectively. We will call \prob{X}{W}, where \mbox{$H \cup \s{C}
   \subseteq W \subseteq F$}, a \tb{member formula} of $S$.  We will
 say that a D-sequent $S$ specified by \Ods{X}{F}{q}{H}{C} \tb{holds}
 if it states redundancy of $C$ according to Definition~\ref{def:dseq}
 (i.e.  if $S$ is \ti{correct}). We will say that $S$ is
 \tb{applicable} to a formula \prob{X}{W} if the latter is a member
 formula of $S$. Otherwise, $S$ is called inapplicable to \prob{X}{W}.
\end{definition}

The structure constraint $H$ of Definition~\ref{def:dseq} specifies a
subset of formulas logically equivalent to \prob{X}{F} where the
clause $C$ is redundant.  From a practical point of view, the presence
of $H$ influences the order in which $X$-clauses can be proved
redundant. Proving an $X$-clause $B$ of $H$ redundant and removing it
from $F$ renders the D-sequent $S$ inapplicable to the modified
formula (i.e. \prob{X}{F \setminus \s{B}}). Thus, if one intends to
use $S$, the clause $B$ should be proved redundant \ti{after} $C$.
\begin{example}
  \label{exmp:mut_excl}
  Consider the example introduced in Subsection~\ref{ssec:old_dseq}.
  In terms of Definition~\ref{def:dseq}, the D-sequent $S_C$ looks
  like \Ods{X}{F}{q}{H_C}{C} where $\pnt{q} = \emptyset$, $H_C =
  \s{B}$ (because the presence of clause $B$ is used to prove $C$
  redundant).  Similarly, the D-sequent $S_B$ looks like
  \Ods{X}{F}{q}{H_B}{B} where $H_B = \s{C}$. D-sequents $S_C$ and
  $S_B$ are mutually exclusive: using $S_C$ to remove $C$ from $F$ as
  a redundant clause renders $S_B$ inapplicable and vice versa.
\end{example}
%
%
\begin{remark}
\label{rem:short_dseqs}
We will abbreviate D-sequent $(\prob{X}{F},\!\pnt{q},\!H)\!\rightarrow\!C$
to \ods{q}{H}{C} if \prob{X}{F} is known from the context.
\end{remark}

\section{Reusing Single And Multiple D-sequents}
In this section, we discuss conditions under which single and multiple
D-sequents can be safely reused.
%
%
\subsection{Reusing a single D-sequent}
\label{ssec:sngl_dseq_reuse}
Let $S$ be a D-sequent specified by \Ods{X}{F}{q}{H}{C}.
 We will say that $S$ is \tb{active} in subspace \pnt{r}
for formula \prob{X}{W}  if
\begin{itemize}
\item $\pnt{q} \subseteq \pnt{r}$ and
\item $S$ is applicable to \prob{X}{W} (see
  Definition~\ref{def:aux_terms})
\end{itemize}
The activation of $S$ means that it can be safely \ti{reused} (i.e.
$C$ can be dropped in the subspace \pnt{r} as
redundant\footnote{Redundancy of a clause in subspace \pnt{q} does not \ti{trivially}
imply its redundancy in subspace $\pnt{q} \subset \pnt{r}$, i.e. in a
smaller subspace (see Appendix~\ref{app:red_subsp}).
} in \prob{X}{W}).

An applicable D-sequent $S$ equal to \Ods{X}{F}{q}{H}{C} is called
\tb{unit} under assignment \pnt{r} if all values assigned in \pnt{q}
\ti{but one} are present in \pnt{r}.  Suppose, for instance,
$\pnt{q}=(y_1 = 0, x_5 = 1)$ and \pnt{r} contains $y_1=0$ but $x_5$ is
not assigned in \pnt{r}. Then $S$ is unit. Adding the assignment
\mbox{$x_5=1$} to \pnt{r}, activates $S$, which indicates that $C$ is
redundant in the subspace $\pnt{r} \cup \s{x_5=1}$.  So, a unit
D-sequent can be used like a unit clause in Boolean Constraint
Propagation (BCP) of a SAT-solver.  Namely, one can use $S$ to derive
the ``deactivating'' assignment $x_5 = 0$ as a direction to a subspace
where $C$ is not proved redundant yet.

%
%
\subsection{Reusing a set of D-sequents}
\label{ssec:mult_dseq_reuse}
In Example~\ref{exmp:mut_excl}, we described D-sequents that
\ti{cannot} be active together. Below, we introduce a condition under
which a set of D-sequents \ti{can} be active together.
%
%
\begin{definition}
  Assignments \pnt{q'} and \pnt{q''} are called \tb{compatible} if
  every variable of $\Va{q'} \cap \Va{q''}$ is assigned the same value
  in \pnt{q'} and \pnt{q''}.
\end{definition}
%
%
\begin{definition}
  \label{def:cons_dseqs}
  Let \prob{X}{F} be an \ecnf formula. Let $S_1,\dots,S_k$ be
  D-sequents specified by
  $(\ppnt{q}{1},H_1)\!\rightarrow\!C_1$,$\dots$,
  $(\ppnt{q}{k},\!H_k)\!\rightarrow\!C_k$ respectively. They are
  called \tb{consistent} if a) every pair \ppnt{q}{i},\ppnt{q}{j},
  $1\!  \leq i,j\!\leq k$ is compatible and b) there is an order $\pi$
  on \s{1,\!\dots,\!k} such that \prob{X}{F\!\setminus
    \s{C_{\pi(1)},\!\dots,\!C_{\pi(m-1)}}} obtained after using
  D-sequents $S_{\pi(1)},\!\dots,\!S_{\pi(m-1)}$ is a member formula
  of $S_{\pi(m)}$, $\forall{m}\!\in\!\s{2,\!\dots\!,k}$.
\end{definition}

The item b) above means that $S_1,\dots,S_k$ can be active together if
there is an order $\pi$ following which one guarantees the
applicability of \ti{every D-sequent}. (The D-sequents $S_C$ and $S_B$
of Example~\ref{exmp:mut_excl} are \ti{inconsistent} because such an
order does not exist. Applying one D-sequent makes the other
inapplicable.)  Definition~\ref{def:cons_dseqs} specifies a
\ti{sufficient} condition for a set of D-sequents to be active
together in a subspace \pnt{r} where $\ppnt{q}{i} \subseteq \pnt{r}$,
$1 \leq i \leq k$. If this condition is met, $C_1,\dots,C_k$ can be
safely removed from \prob{X}{F} in the subspace \pnt{r}
(see~\cite{qe_learn}).

\section{Atomic D-sequents}
\label{sec:atom_dseqs}
In this section, we describe D-sequents called atomic. An atomic
D-sequent is generated when proving a clause redundant is
trivial~\cite{fmcad13}. We modify the definitions of~\cite{fmcad13} to
accommodate the appearance of a structure constraint.

%
%
\subsection{Atomic D-sequents of the first kind}
\begin{proposition}
  \label{prop:sat_cls}
Let \prob{X}{F} be an \ecnf formula and $C\!\in\!F$ and $v \in
\V{C}$. Let $v=b$ where $b\!\in\!\s{0,1}$ satisfy $C$. Then the
D-sequent $(\pnt{q},H)\!\rightarrow\!C$ holds where
$\pnt{q}\!=\!(v=b)$ and $H\!=\!\emptyset$. We will refer to it as an
atomic D-sequent of the \tb{first kind}.
\end{proposition}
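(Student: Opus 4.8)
The plan is to show directly from Definition~\ref{def:dseq} that the D-sequent $(\pnt{q},H)\!\rightarrow\!C$ with $\pnt{q}=(v=b)$ and $H=\emptyset$ holds, i.e. that $\cof{C}{q}$ is redundant in every formula $\prob{X}{\cof{W}{q}}$ logically equivalent to $\prob{X}{\cof{F}{q}}$ where $\s{C}\subseteq W\subseteq F$ (since $H=\emptyset$, the structure constraint imposes no extra requirement). The key observation is that the assignment $v=b$ satisfies $C$, and therefore by Definition~\ref{def:cofactor} we have $\cof{C}{q}\equiv 1$, i.e. $\cof{C}{q}$ is the trivially true clause.

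First I would fix an arbitrary $W$ with $\s{C}\subseteq W\subseteq F$ and an arbitrary formula $\prob{X}{\cof{W}{q}}$ logically equivalent to $\prob{X}{\cof{F}{q}}$, and recall what it means for the (true) clause $\cof{C}{q}$ to be redundant in such a formula: by Definition~\ref{def:red_cls}, we must show $\prob{X}{\cof{W}{q}}\equiv\prob{X}{\cof{W}{q}\setminus\s{\cof{C}{q}}}$. Then I would note that since $\cof{C}{q}\equiv 1$, conjoining it with the rest of $\cof{W}{q}$ changes nothing: $\cof{W}{q}$ and $\cof{W}{q}\setminus\s{\cof{C}{q}}$ are literally the same set of clauses up to the all-satisfied clause, hence logically equivalent as quantifier-free formulas, and therefore the $\exists X$ versions are equivalent as well by Definition~\ref{def:formula-equiv}. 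This gives the required redundancy. Since $W$ and the equivalent formula were arbitrary, the D-sequent holds.

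There is essentially no obstacle here — the statement is an ``atomic'' base case and the proof is a two-line unfolding of the definitions. The only point requiring a word of care is the interaction with the structure constraint: one must check that $H=\emptyset$ is a legitimate choice, i.e. that $H\subseteq F\setminus\s{C}$ is satisfied vacuously, and that the quantifier ``$H\cup\s{C}\subseteq W\subseteq F$'' in Definition~\ref{def:dseq} reduces to ``$\s{C}\subseteq W\subseteq F$'' — both are immediate. I would also remark briefly why $C$ must be an $X$-clause for this to be a D-sequent in the first place (it is a hypothesis of Definition~\ref{def:dseq}), but note that the redundancy argument itself does not use that fact; it works purely because $\cof{C}{q}$ is satisfied. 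So the write-up would be just a short paragraph invoking Definitions~\ref{def:cofactor}, \ref{def:formula-equiv}, \ref{def:red_cls}, and~\ref{def:dseq} in sequence.
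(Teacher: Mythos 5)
Your proof is correct: since $v=b$ satisfies $C$, the cofactor \cof{C}{q} is trivially true, so removing it from \cof{W}{q} for any member formula leaves a logically equivalent formula, which is exactly the redundancy required by Definition~\ref{def:dseq}. The paper itself does not include a proof (all propositions are deferred to~\cite{qe_learn}), and your two-line unfolding of Definitions~\ref{def:cofactor}, \ref{def:red_cls}, and~\ref{def:dseq} is precisely the expected argument, including the correct observation that the structure constraint $H=\emptyset$ plays no role here.
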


\tb{Proofs} of all propositions can be found in~\cite{qe_learn}.
Satisfying $C$ by an assignment does not require the presence of any
other clause of $F$.  Hence, the structure constraint of a D-sequent
of the first kind is an empty set of clauses.
%
%
\begin{example}
  Let \prob{X}{F} be an \ecnf formula and $C= y_1 \vee \overline{x}_5$
  be a clause of $F$.  Since $C$ is satisfied by assignments $y_1 = 1$
  and $x_5=0$, D-sequents \oDs{y_1=1}{\emptyset}{C} and
  \oDs{x_5=0}{\emptyset}{C} hold.
\end{example}
%
%
\subsection{Atomic D-sequents of the second kind}
\label{ssec:second_kind}
\begin{proposition}
  \label{prop:uns_cls}
 Let \prob{X}{F} be an \ecnf formula and \pnt{q} be an assignment to
 \V{F}.  Let $C$ and $B$ be clauses of $F$ and $C$ be an
 $X$-clause. Let \cof{C}{q} still be an $X$-clause and \cof{B}{q}
 imply \cof{C}{q} (i.e. every literal of \cof{B}{q} is in
 \cof{C}{q}). Then the D-sequent \ods{q}{H}{C} holds where $H =
 \s{B}$.  We will refer to it as an atomic D-sequent of the \tb{second
   kind}.
\end{proposition}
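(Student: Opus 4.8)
The plan is to unpack Definition~\ref{def:dseq} and show directly that \cof{C}{q} is redundant in every member formula of the claimed D-sequent. So let \prob{X}{\cof{W}{q}} be any member formula, i.e. $H \cup \s{C} \subseteq W \subseteq F$, which by $H = \s{B}$ just says $\s{B,C\} \subseteq W \subseteq F$; assume \prob{X}{\cof{V}{q}} is logically equivalent to \prob{X}{\cof{W}{q}} where $\s{B,C\} \subseteq V \subseteq F$. I must show \cof{C}{q} is redundant in \prob{X}{\cof{V}{q}}, i.e. $\prob{X}{\cof{V}{q}} \equiv \prob{X}{\cof{V}{q} \setminus \s{\cof{C}{q}}}$. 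The key observation is that $B \in V$, so $\cof{B}{q} \in \cof{V}{q}$, and by hypothesis every literal of \cof{B}{q} is a literal of \cof{C}{q}, i.e. \cof{B}{q} \textit{subsumes} \cof{C}{q}. Hence \cof{C}{q} is implied by \cof{B}{q} alone, so it is implied by $\cof{V}{q} \setminus \s{\cof{C}{q}}$ (which still contains \cof{B}{q}, since $B \neq C$ as \cof{C}{q} is still an $X$-clause while — well, more carefully, even if $B=C$ the statement is about a clause in a set, and \cof{B}{q}=\cof{C}{q} is trivially implied; but the interesting case is $B\neq C$).

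First I would record the subsumption fact: an implication between a clause and a superclause is the elementary fact that if $\mathit{lits}(\cof{B}{q}) \subseteq \mathit{lits}(\cof{C}{q})$ then any assignment satisfying \cof{B}{q} satisfies \cof{C}{q}. Second, I would note this is a \textit{syntactic} redundancy: removing \cof{C}{q} from \cof{V}{q} changes nothing because \cof{C}{q} is entailed by another clause that stays, so $\cof{V}{q} \equiv \cof{V}{q} \setminus \s{\cof{C}{q}}$ as quantifier-free formulas. By the last sentence of Definition~\ref{def:red_cls}, quantifier-free equivalence after removing $C$ implies redundancy of $C$ in the existentially quantified formula as well, so $\prob{X}{\cof{V}{q}} \equiv \prob{X}{\cof{V}{q} \setminus \s{\cof{C}{q}}}$. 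This holds for \textit{every} such $V$, so the D-sequent \ods{q}{H}{C} with $H = \s{B}$ holds by Definition~\ref{def:dseq}.

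The only subtlety — and the place where the hypotheses "\cof{C}{q} is still an $X$-clause" and "$C$ is an $X$-clause" actually get used — is making sure the D-sequent is \textit{well-formed} in the sense of Definition~\ref{def:dseq}: that definition requires $C$ to be an $X$-clause of $F$ and $H \subseteq F \setminus \s{C}$. The hypothesis that $C$ is an $X$-clause handles the former; I would point out that $B \in F \setminus \s{C}$ because if $B = C$ then \cof{B}{q} = \cof{C}{q} cannot subsume a clause while \cof{C}{q} being a nonempty $X$-clause makes the statement vacuous in the degenerate sense — more precisely, the hypothesis is implicitly that $B$ is a distinct clause whose cofactor subsumes \cof{C}{q}, and if $B=C$ there is nothing to do. I would state this cleanly rather than belabor it. The condition "\cof{C}{q} still an $X$-clause" is what guarantees \cof{C}{q} \neq 1 and \cof{C}{q} \neq \emptyset-clause (empty/falsified), so that talking about its redundancy is meaningful and $H = \s{B}$ is a legitimate structure constraint.

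I expect no real obstacle here: the proposition is essentially the classical "subsumed clause is redundant" observation lifted through the cofactor operation and through Definition~\ref{def:red_cls}'s remark that syntactic redundancy implies redundancy under quantification. The main care needed is bookkeeping — checking that the well-formedness conditions of Definition~\ref{def:dseq} ($C$ an $X$-clause of $F$, $H \subseteq F\setminus\s{C}$) are met, and that the argument is carried out uniformly over \textit{all} member formulas $V$, not just $W$ itself, which is exactly what makes this a structure-constrained D-sequent rather than the old-style one.
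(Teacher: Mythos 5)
Your proposal is correct and takes essentially the approach the paper intends (the paper itself defers the proof to~\cite{qe_learn}): since $B$ belongs to every member formula $W$ and \cof{B}{q} subsumes \cof{C}{q}, removing \cof{C}{q} from \cof{W}{q} preserves quantifier-free equivalence, which lifts to \prob{X}{\cof{W}{q}} by the last sentence of Definition~\ref{def:red_cls}, uniformly over all member formulas. Your handling of the well-formedness side conditions ($C$ an $X$-clause, $B \in F \setminus \s{C}$) is the right bookkeeping and introduces no gap.
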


%
%
\begin{example}
  Let \prob{X}{F} be an \ecnf formula.  Let $B=y_1 \vee x_2$ and $C =
  x_2 \vee \overline{x}_3$ be clauses of $F$. Let $\pnt{q}=(y_1=0)$.
  Since \cof{B}{q} implies \cof{C}{q} the D-sequent \ods{q}{\s{B}}{C}
  holds.
\end{example}
%
%
\subsection{Atomic D-sequents of the third kind}
\label{ssec:third_kind}
%
%
\begin{definition}
\label{def:resol}
Let clauses $C'$,$C''$ have opposite literals of exactly one variable
$v \in \V{C'} \cap \V{C''}$.  The clause $C$ having all literals of
$C',C''$ but those of $v$ is called the \tb{resolvent} of
$C'$,$C''$ on $v$. The clause $C$ is said to be obtained by
\tb{resolution} on $v$.  Clauses $C'$,$C''$ are called \tb{resolvable}
on~$v$.
\end{definition}
%
%
\begin{definition}
\label{def:blk_cls}
 A clause $C$ of a CNF formula $F$ is called \tb{blocked} at variable
 $v$, if no clause of $F$ is resolvable with $C$ on $v$.  The notion
 of blocked clauses was introduced in~\cite{blocked_clause}.
\end{definition}

If a clause $C$ of an \ecnf formula is blocked with respect to a
quantified variable in a subspace, it is redundant in this
subspace. This fact is used by the proposition below.

%
%
\begin{proposition}
  \label{prop:dseq_third_kind}
Let \prob{X}{F} be an \ecnf formula. Let $C$ be an $X$-clause of $F$
and $v \in(\V{C} \cap X)$. Let $C_1,\dots,C_k$ be the clauses of $F$
resolvable with $C$ on variable $v$. Let
\noDs{\ppnt{q}{1}}{H_1}{C_1},$\dots$,\noDs{\ppnt{q}{k}}{H_k}{C_k} be
consistent D-sequents (see Definition~\ref{def:cons_dseqs}).  Then the
D-sequent \noDs{q}{H}{C} holds where
\pnt{q}=$\bigcup\limits_{i=1}^{i=k}\ppnt{q}{i}$ and $H =
\bigcup\limits_{i=1}^{i=k}H_i$. We will refer to it as an atomic
D-sequent of the \tb{third kind}.
\end{proposition}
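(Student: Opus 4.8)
The plan is to reduce redundancy of $C$ to the classical fact that a blocked clause is redundant, applied \emph{after} every resolution partner $C_i$ of $C$ on $v$ has been removed. Concretely, take any subspace $\pnt{r}$ with $\pnt{q} \subseteq \pnt{r}$ and any member formula $\prob{X}{W}$ of the claimed D-sequent, i.e. $H \cup \s{C} \subseteq W \subseteq F$; we must show $\cof{C}{r}$ is redundant in $\prob{X}{\cof{W}{r}}$. Since the D-sequents $\noDs{\ppnt{q}{1}}{H_1}{C_1},\dots,\noDs{\ppnt{q}{k}}{H_k}{C_k}$ are consistent and each $\ppnt{q}{i} \subseteq \pnt{q} \subseteq \pnt{r}$, Definition~\ref{def:cons_dseqs} supplies an order $\pi$ in which we may successively remove $C_{\pi(1)},\dots,C_{\pi(k)}$ from $\prob{X}{W}$ in subspace $\pnt{r}$, each step being a legitimate application of the corresponding D-sequent (here I use that $H_i \subseteq H \subseteq W$, so $W$ restricted appropriately is still a member formula of $S_i$ — this is the bookkeeping point I'd check carefully). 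After removing all the $C_i$, we are left with $\prob{X}{\cof{W'}{r}}$ where $W' = W \setminus \s{C_1,\dots,C_k}$, and this formula is equivalent to $\prob{X}{\cof{W}{r}}$.

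Next I would argue that in $\cof{W'}{r}$ the clause $\cof{C}{r}$ is blocked at $v$. The clauses of $F$ resolvable with $C$ on $v$ are exactly $C_1,\dots,C_k$ by hypothesis, so in $W' \subseteq F$ no clause other than possibly a cofactored $C_i$ is resolvable with $C$ on $v$ — but all the $C_i$ have been deleted. One has to rule out that cofactoring by $\pnt{r}$ \emph{creates} a new resolution partner: if $\cof{D}{r}$ is resolvable with $\cof{C}{r}$ on $v$ for some $D \in W'$, then $D$ itself contained the literal of $v$ opposite to $C$'s (cofactoring only removes literals, and $v$ is unassigned in $\pnt{r}$ since $v \in \V{C}$ and $\cof{C}{r}$ still mentions $v$), and $D$ shares no conflicting literal with $C$ on any other variable after cofactoring — but it might have conflicted before. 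This is the step I expect to be the main obstacle: showing that blocking is preserved under cofactoring, i.e. that if $D$ and $C$ clash on some variable $w \neq v$, then $\cof{C}{r}$ is automatically satisfied or the resolvent is tautological in subspace $\pnt{r}$, so no genuine new resolvent appears. The clean way is: for $\cof{D}{r}$ and $\cof{C}{r}$ to be resolvable on $v$, neither may be satisfied by $\pnt{r}$ and they must agree on all shared variables except $v$; since $D$ was not resolvable with $C$ on $v$ in $F$, $D$ and $C$ clashed on some $w$, and if that clash survives cofactoring then they still clash (not resolvable), while if it does not survive then $\pnt{r}$ assigned $w$ so as to satisfy one of $C,D$ — contradiction. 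So no new partner is created, and $\cof{C}{r}$ is blocked at $v$ in $\cof{W'}{r}$.

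Finally, invoke the fact recalled just before the proposition (that a clause of an \ecnf formula blocked at a quantified variable is redundant): since $v \in X$, $\cof{C}{r}$ is redundant in $\prob{X}{\cof{W'}{r}}$, hence $\prob{X}{\cof{W'}{r}} \equiv \prob{X}{\cof{W'}{r}\setminus\s{C}}$. Chaining the equivalences — removal of the $C_i$ preserves the quantified formula, blockedness removes $C$ — gives $\prob{X}{\cof{W}{r}} \equiv \prob{X}{\cof{W}{r}\setminus\s{C}}$, which is exactly the statement that $\cof{C}{r}$ is redundant in every member formula of $\noDs{q}{H}{C}$ in the subspace $\pnt{q}$. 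Since $\pnt{r}$ and $W$ were arbitrary subject to the constraints, the D-sequent $\noDs{q}{H}{C}$ holds. I would present the preservation-of-blocking lemma as a short self-contained claim inside the proof, since that is where all the real work sits; the rest is assembling the equivalences in the order dictated by $\pi$.
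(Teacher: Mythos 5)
Your overall route is the intended one: use the consistency order $\pi$ to remove the resolution partners $C_1,\dots,C_k$ in the subspace (the paper licenses exactly this in the remark after Definition~\ref{def:cons_dseqs}), observe that $C$ is then blocked at the quantified variable $v$, and invoke the fact stated just before the proposition that a clause blocked at a quantified variable in a subspace is redundant there. Your preservation-of-blocking claim is also argued correctly: a clash between $C$ and $D$ on a variable $w \neq v$ either survives cofactoring (so the pair is still not resolvable on $v$) or $w$ is assigned by \pnt{r} so that one of $C,D$ is satisfied and disappears. Two remarks on bookkeeping: the definition of a D-sequent only requires redundancy in the subspace \pnt{q} itself, so taking $\pnt{r}=\pnt{q}$ suffices (working with arbitrary $\pnt{r}\supseteq\pnt{q}$ needlessly drags in the subtlety of Appendix~\ref{app:red_subsp}); and a member formula $W$ of the new D-sequent is only guaranteed to contain $H \cup \s{C}$, not the $C_i$ themselves, so your $\pi$-ordered removal should delete only those $C_i$ that actually lie in $W$ (the absent ones pose no threat to blockedness, and consistency relative to $F$ still licenses each removal since $H_i \subseteq H \subseteq W$ and the earlier-removed $C_j$ are outside $H_i \cup \s{C_i}$).

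The genuine gap is that you never handle the case where $\pnt{q}=\bigcup_i \ppnt{q}{i}$ assigns $v$ itself. Your final step needs the cofactored clause to still contain $v$, so that ``blocked at the quantified variable $v$'' yields redundancy; you establish ``$v$ is unassigned in \pnt{r}'' only inside the sub-argument where resolvability on $v$ is already assumed, but nothing in the hypotheses prevents some $\ppnt{q}{i}$ from assigning $v$ --- e.g.\ a first-kind D-sequent for $C_i$ obtained by satisfying $C_i$ through its literal of $v$, which falsifies the literal of $v$ in $C$. Then $C$ restricted to the subspace no longer contains $v$, blockedness at $v$ is vacuous, and redundancy does not follow; indeed the conclusion can fail: let $X=\s{v}$, $F=\s{C,C_1}$ with $C = y \vee v$ and $C_1=\overline{v}$, so $C_1$ is the only clause resolvable with $C$ on $v$. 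The first-kind D-sequent $((v\!=\!0),\emptyset)\rightarrow C_1$ holds and is trivially consistent, yet $((v\!=\!0),\emptyset)\rightarrow C$ does not hold, since in the subspace $v\!=\!0$ the clause $C$ reduces to $y$, which is not redundant in \prob{X}{y}. So your proof (and the statement read literally) needs the implicit side condition that $v$ is not assigned in any \ppnt{q}{i} --- which is how \prd actually applies the rule, blockedness being established at a currently unassigned variable --- and this condition should be stated and used explicitly where you conclude that $C$ restricted to the subspace is blocked at $v$.
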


%
%
\begin{example}
  Let \prob{X}{F} be an \ecnf formula. Let $C_1,C_2,C_3$ be the only
  clauses of $F$ with variable $x_1 \in X$ where $C_1 = x_1 \vee x_2
  $, $C_2 = y_1 \vee \overline{x}_1$, $C_3 = y_2 \vee
  \overline{x}_1$. Since $y_1=1$ satisfies $C_2$, the D-sequent
  \oDs{y_1=1}{\emptyset}{C_2} holds. Suppose that the D-sequent
  \oDs{x_2=1}{\s{C_4}}{C_3} holds where $C_4~\in~F$.  Note that the
  two D-sequents above are consistent. So, from
  Proposition~\ref{prop:dseq_third_kind} it follows that the D-sequent
  \noDs{\pnt{q}}{\s{C_4}}{C_1} holds where $\pnt{q} =
  (y_1=1,x_2=1)$. The clause $C_1$ is redundant in the subspace
  \pnt{q} because it is blocked at $x_1$ in this subspace.
\end{example}

\section{Joining And Updating D-sequents}
\label{sec:join_upd}
In this section, we recall two methods for producing a new D-sequent
from existing ones~\cite{fmcad13}.  In Subsection~\ref{ssec:join}, we
present a resolution-like operation called \ti{join} that produces a
new D-sequent from two parent
D-sequents. Subsection~\ref{ssec:new_impl} describes how a D-sequent
is recomputed after adding implications.  We modify the description of
these methods given in~\cite{fmcad13} to accommodate the appearance of
a structure constraint.
In Appendix~\ref{app:gen_new_dseqs}, we describe one more way to
produce a new D-sequent that was not described in~\cite{fmcad13}.
%
%
\subsection{Join operation}
\label{ssec:join}

%
%
\begin{definition}
\label{def:res_part_assgns}
Let \pnt{q'} and \pnt{q''} be assignments in which exactly one
variable $v \in \Va{q'} \cap \Va{q''}$ is assigned different values.
The assignment \pnt{q} consisting of all the assignments of \pnt{q'}
and \pnt{q''} but those to $v$ is called the \ti{resolvent} of
\pnt{q'},\,\pnt{q''} on $v$.  Assignments \pnt{q'},\,\pnt{q''} are
called \ti{resolvable} on $v$.
\end{definition}
%
%
\begin{proposition}
\label{prop:join_rule}
Let \prob{X}{F} be an \ecnf formula. Let D-sequents
$(\pnt{q'}\!,\!H')\!\rightarrow\!C$ and
$(\pnt{q''}\!,\!H'')\!\rightarrow\!C$ hold. Let \pnt{q'}, \pnt{q''} be
resolvable on $v$ and \pnt{q}\, be the resolvent. Then the D-sequent
\ods{q}{H}{C} holds where $H = H' \cup H''$.
\end{proposition}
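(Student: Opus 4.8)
The plan is to unfold Definition~\ref{def:dseq} on both sides and verify the three things a D-sequent asserts: that $C$ is an $X$-clause of $F$, that $H \subseteq F \setminus \{C\}$, and — the substantive part — that $\cof{C}{q}$ is redundant in every member formula $\prob{X}{\cof{W}{q}}$ with $H \cup \{C\} \subseteq W \subseteq F$ that is logically equivalent to $\prob{X}{\cof{F}{q}}$. The first two are immediate: $C$ being an $X$-clause and $H' , H'' \subseteq F \setminus \{C\}$ are inherited from the two parent D-sequents, and $H = H' \cup H''$ is then also a subset of $F \setminus \{C\}$.

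For the redundancy claim, fix an arbitrary member formula $\prob{X}{W}$ of the purported D-sequent $\ods{q}{H}{C}$, so $H \cup \{C\} \subseteq W \subseteq F$, and assume $\prob{X}{\cof{W}{q}} \equiv \prob{X}{\cof{F}{q}}$; I must show $\cof{C}{q}$ is redundant in $\prob{X}{\cof{W}{q}}$. The key step is a case split on the single variable $v$ on which \pnt{q'} and \pnt{q''} are resolvable. Since \pnt{q} is the resolvent, $v \notin \Va{q}$, and extending \pnt{q} by $v=0$ gives one of \pnt{q'},\,\pnt{q''} (restricted appropriately) and by $v=1$ the other; more precisely $\cof{(\cdot)}{q}$ further cofactored by $v=b$ equals $\cof{(\cdot)}{q'}$ or $\cof{(\cdot)}{q''}$. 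I would argue that redundancy of $\cof{C}{q}$ in $\prob{X}{\cof{W}{q}}$ holds iff it holds in both subspaces $v=0$ and $v=1$ (redundancy is preserved under cofactoring by a single literal, and conversely a clause redundant in both halves of a branch on one variable is redundant overall — this is exactly the semantics via boundary points in Section~\ref{sec:bnd_pnts}: a removable boundary point, if it exists, lies in one of the two half-spaces). In the half-space $v=b'$ corresponding to \pnt{q'}, the formula becomes $\prob{X}{\cof{W}{q'}}$; since $H' \subseteq H \subseteq W$ and $C \in W$, and $W \subseteq F$, this is a member formula of the parent D-sequent $\ods{q'}{H'}{C}$, provided it is logically equivalent to $\prob{X}{\cof{F}{q'}}$ — which follows by cofactoring the assumed equivalence $\prob{X}{\cof{W}{q}} \equiv \prob{X}{\cof{F}{q}}$ by $v=b'$. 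Hence the parent D-sequent gives redundancy of $\cof{C}{q'}$ there, and symmetrically for \pnt{q''}. Combining the two halves yields redundancy of $\cof{C}{q}$ in $\prob{X}{\cof{W}{q}}$.

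The main obstacle I anticipate is the ``conversely'' direction of the branching argument: that redundancy in both $v=0$ and $v=1$ subspaces implies redundancy in the combined subspace \pnt{q}. This needs the boundary-point characterization — an $X$-clause is redundant in $\prob{X}{F}$ iff it has no removable boundary point — together with the observation that every full assignment extending \pnt{q} assigns $v$ some value, so any removable boundary point of $\cof{C}{q}$ would be a removable boundary point in one of the two half-spaces, contradicting the parent D-sequents. One must also be careful that $\cof{C}{q}$ remains an $X$-clause (not satisfied, not collapsed) — but if it were satisfied by \pnt{q} it is trivially redundant, and if a literal of $v$ survived in $\cof{C}{q}$ then $v \notin \Va{q}$ forces consistency with the branch split, so no degenerate case breaks the argument. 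The remaining bookkeeping — that $H = H' \cup H''$ is the right structure constraint and that the member-formula inclusions propagate through cofactoring — is routine.
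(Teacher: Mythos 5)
Your argument hinges on the identification of $\cof{W}{q}$ further cofactored by $v=b'$ with $\cof{W}{q'}$, and this identification is false in general: the resolvent \pnt{q} contains \emph{all} assignments of \pnt{q'} and \pnt{q''} except those to $v$, so $\pnt{q} \cup \s{v=b'}$ is typically a \emph{proper} superset of \pnt{q'}. The paper's own example after Definition~\ref{def:join_rule} shows this: $\pnt{q'}=(y_1\!=\!0,x_1\!=\!0)$, $\pnt{q''}=(y_1\!=\!1,x_2\!=\!1)$, $\pnt{q}=(x_1\!=\!0,x_2\!=\!1)$, and $\pnt{q}\cup\s{y_1\!=\!0}$ strictly contains \pnt{q'}. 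This single inaccuracy hides the two genuinely hard steps. First, to apply the parent D-sequent $(\pnt{q'},H')\rightarrow C$ to $W$, Definition~\ref{def:dseq} requires $\prob{X}{\cof{W}{q'}} \equiv \prob{X}{\cof{F}{q'}}$; cofactoring your assumed equivalence $\prob{X}{\cof{W}{q}} \equiv \prob{X}{\cof{F}{q}}$ by $v=b'$ only yields equivalence in the \emph{smaller} subspace $\pnt{q}\cup\s{v=b'}$, and equivalence in a smaller subspace does not lift to the larger subspace \pnt{q'}. So you never establish that $\prob{X}{W}$ is a member formula of the parent D-sequent. Second, even granting applicability at \pnt{q'}, your half-space argument needs redundancy of $C$ in the refined subspace $\pnt{q}\cup\s{v=b'}$, and your parenthetical claim that ``redundancy is preserved under cofactoring by a single literal'' is precisely the monotonicity property the paper warns is \emph{not} trivially true (footnote in Subsection~\ref{ssec:sngl_dseq_reuse} and Appendix~\ref{app:red_subsp}); the paper gets it only by an explicit simplifying assumption, and \cite{qe_learn} by replacing redundancy with virtual redundancy. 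Your argument is sound only in the special case $\Va{q'}=\Va{q''}$, where $\pnt{q}\cup\s{v=b'}=\pnt{q'}$ and both issues disappear.

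By contrast, the step you flagged as the main obstacle --- that redundancy in both branches $v=0$ and $v=1$ implies redundancy in subspace \pnt{q} --- is the easy part: for any total assignment extending \pnt{q} one simply branches on the value of $v$ (whether $v\in X$ or not), and no boundary-point machinery is needed. Note also that this paper contains no proof of Proposition~\ref{prop:join_rule} to compare against: all proofs of propositions are deferred to \cite{qe_learn}, whose stronger (virtual-redundancy-based) definitions are what make the two missing steps above go through. As written, your proposal does not supply them, so it does not prove the proposition in the generality stated.
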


%
%
\begin{definition}
\label{def:join_rule}
We will say that the D-sequent \ods{q}{H}{C} of
Proposition~\ref{prop:join_rule} is produced by \tb{joining
  D-sequents} \ods{q'}{H'}{C} and \ods{q''}{H''}{C} \tb{at variable}
\bm{v}.
\end{definition}
%
%
\begin{example}
  Let \prob{X}{F(X,Y)} be an \ecnf formula. Let $C_1,C_2,C_3$ be
  clauses of $F$ and $C_1$ be an $X$-clause.  Let
  $(\pnt{q'},\!H')\!\rightarrow\!C_1$,
  $(\pnt{q''},\!H'')\!\rightarrow\!C_1$ be D-sequents were
  $\pnt{q'}\!=\!(y_1\!=\!0,x_1=0)$,
  $\pnt{q''}\!=\!(y_1\!=\!1,x_2\!=\!1)$, $H'=\s{C_2}$,
  $H''=\s{C_3}$. By joining them at $y_1$, one produces the D-sequent
  \ods{q}{H}{C_1} where $\pnt{q}\!=\!(x_1=0,x_2=1)$ and $H =
  \s{C_2,C_3}$.
\end{example}

%
%
\subsection{Updating D-sequents after adding an implication}
\label{ssec:new_impl}

%
%
As we mentioned earlier, proving redundancy of $X$-clauses of
\prob{X}{F} requires adding new clauses implied by $F$. The
proposition below shows that the D-sequents learned for \prob{X}{F}
before can be trivially updated.
\begin{proposition}
  \label{prop:dseq_add_cls}
 Let D-sequent \Ods{X}{F}{q}{H}{C} hold and $R$ be a formula implied
 by $F$. Then the D-sequent \Ods{X}{F \wedge R}{q}{H}{C} holds too.
\end{proposition}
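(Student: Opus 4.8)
The plan is to unfold both D-sequents through Definition~\ref{def:dseq} and to show that every member formula of $\Ods{X}{F\wedge R}{q}{H}{C}$ inherits the redundancy of $C$ from a member formula of $\Ods{X}{F}{q}{H}{C}$. Since $R$ is implied by $F$, the formula $F\wedge R$ is logically equivalent to $F$, hence $(F\wedge R)_{\vec q}\equiv F_{\vec q}$ and $\exists X[(F\wedge R)_{\vec q}]\equiv\exists X[F_{\vec q}]$ (equivalence survives cofactoring and the addition of the quantifier prefix). Thus $W$ is a member formula of the new D-sequent exactly when $H\cup\{C\}\subseteq W\subseteq F\wedge R$ and $\exists X[W_{\vec q}]\equiv\exists X[F_{\vec q}]$, and we must prove that $C_{\vec q}$ is redundant in $\exists X[W_{\vec q}]$ for every such $W$. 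Fix one and put $U:=W\cap F$, so that $W=U\cup(W\setminus F)$ with $W\setminus F\subseteq R$ and $H\cup\{C\}\subseteq U\subseteq F$.

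I would first argue that $U$ is a member formula of the \emph{old} D-sequent, i.e.\ $\exists X[U_{\vec q}]\equiv\exists X[F_{\vec q}]$. The inclusion ``$\Leftarrow$'' is immediate from $U\subseteq F$. For ``$\Rightarrow$'' I would work subspace by subspace: wherever $F_{\vec q}$ is unsatisfiable over $X$ so is $W_{\vec q}$ (they have equal $X$-projections), while the clauses of $W\setminus F\subseteq R$, being implied by $F$, cannot restore satisfiability in such a subspace, so deleting them — which turns $W_{\vec q}$ into $U_{\vec q}$ — cannot either; hence $U_{\vec q}$ is unsatisfiable over $X$ wherever $F_{\vec q}$ is.

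Granting this, the hypothesis applied to the member formula $U$ gives $\exists X[U_{\vec q}]\equiv\exists X[(U\setminus\{C\})_{\vec q}]$. Since $\exists X[W_{\vec q}]$ and $\exists X[U_{\vec q}]$ are both $\equiv\exists X[F_{\vec q}]$, it then suffices to show $\exists X[(W\setminus\{C\})_{\vec q}]\equiv\exists X[(U\setminus\{C\})_{\vec q}]$: ``$\Rightarrow$'' follows from $U\setminus\{C\}\subseteq W\setminus\{C\}$, and for ``$\Leftarrow$'' a subspace at which $(U\setminus\{C\})_{\vec q}$ is satisfiable over $X$ is — by the equivalence $\exists X[U_{\vec q}]\equiv\exists X[(U\setminus\{C\})_{\vec q}]$ together with the membership of $U$ in the old D-sequent — one at which $F_{\vec q}$ is satisfiable over $X$, and any witnessing $X$-assignment satisfies all of $F\supseteq U\setminus\{C\}$ as well as all of $R\supseteq W\setminus F$, hence all of $W\setminus\{C\}$ (recall $C\notin W\setminus F$). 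This yields the redundancy of $C_{\vec q}$ in $\exists X[W_{\vec q}]$ that Definition~\ref{def:dseq} requires for $\Ods{X}{F\wedge R}{q}{H}{C}$.

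The step I expect to be the crux is the identity $\exists X[U_{\vec q}]\equiv\exists X[F_{\vec q}]$ with $U=W\cap F$ — that stripping the $R$-clauses off a member formula of the new D-sequent does not enlarge its $X$-projection. This is the only place where one really needs the original D-sequent to \emph{hold} (not merely that $C$ is redundant in $\prob{X}{F}$), and keeping the subspace bookkeeping honest there is the bulk of the work; by contrast, the reduction in the third paragraph is formula manipulation, and its ``$\Leftarrow$'' direction is a short satisfiability argument that again uses only that $R$ is implied by $F$.
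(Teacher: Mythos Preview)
The paper itself does not prove this proposition: right after Proposition~\ref{prop:sat_cls} it states that proofs of all propositions are in~\cite{qe_learn}. So there is no in-paper argument against which to compare yours.

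Your argument has a real gap at precisely the step you flag as the crux, the claim that $\exists X[U_{\vec q}]\equiv\exists X[F_{\vec q}]$ for $U=W\cap F$. Your reasoning runs: where $F_{\vec q}$ is unsatisfiable over $X$, so is $W_{\vec q}$; the $R$-clauses in $W\setminus F$ ``cannot restore satisfiability,'' so removing them to obtain $U$ cannot either. But removing clauses from an unsatisfiable CNF can perfectly well make it satisfiable; the inference is going the wrong direction. You also say this is ``the only place where one really needs the original D-sequent to hold,'' yet the argument you give for it never invokes that hypothesis.

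Here is a concrete failure. Take $X=\{x\}$, $Y=\{y\}$, $\vec q=\emptyset$, $C=x$, $F=\{x,\,y\}$, $H=\emptyset$. Then $\exists X[F]\equiv y$; the only member formula of the old D-sequent is $F$ itself, and there $\exists X[\{y\}]\equiv y$, so $C$ is redundant and the old D-sequent holds. Let $R=\{y\vee\bar x\}$, which is implied by $F$. Now $W=\{x,\,y\vee\bar x\}$ satisfies $\exists X[W]\equiv y$, hence is a member formula of the new D-sequent, but $U=W\cap F=\{x\}$ has $\exists X[U]\equiv 1\not\equiv y$: your crux step fails. Worse, $\exists X[W\setminus\{x\}]=\exists X[\{y\vee\bar x\}]\equiv 1\not\equiv y$, so $C$ is \emph{not} redundant in $\exists X[W]$ and the new D-sequent itself fails under Definition~\ref{def:dseq} as literally written here. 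The paper explicitly warns (Appendix~\ref{app:red_subsp}) that it is simplifying the redundancy notion and that the precise treatment in~\cite{qe_learn} uses ``virtual redundancy''; under the definitions printed in this paper your route through $U=W\cap F$ cannot be salvaged, and a correct argument would have to import that stronger framework.
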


\section{Introducing \Apqe}
\label{sec:Apqe}

In this section, we describe a PQE-algorithm called \Apqe. As we
mentioned earlier, in contrast to \apqe of~\cite{hvc-14},\,\Apqe uses
single-event backtracking. Namely, \Apqe proves redundancy of
$X$-clauses one by one and backtracks as soon as the current target
$X$-clause is proved redundant in the current subspace. Besides, due
to introduction of structure constraints, it is safe for \Apqe to
reuse D-sequents. A proof of correctness of \Apqe is given in
Appendix~\ref{app:sound_compl}.

%
%
\vspace{-3pt}
\subsection{Main loop of \Apqe}
%
%
\setlength{\intextsep}{4pt}
\setlength{\textfloatsep}{10pt}
\begin{wrapfigure}{L}{1.4in}
\small
\begin{tabbing}
a\=bb\= cc\= ddddddd\= \kill
$\Apqe(F_1,F_2 \del X)$\{\\
\tb{\scriptsize{1}}\>\,\,$\mi{Ds} := \emptyset$ \\
\tb{\scriptsize{2}}\> \,\,while $(\mi{true})$ \{ \\
\tb{\scriptsize{3}}\Tt  $C:= \mi{PickXcls}(F_1)$   \\
\tb{\scriptsize{4}}\Tt  if $(C = \mi{nil})$ return($F_1$)\\
\tb{\scriptsize{5}}\Tt   $\mi{PrvRed}(F_1,\!F_2,\!\mi{Ds}\!\del\!C,X)$\\
\tb{\scriptsize{6}}\Tt   $F_1 := F_1 \setminus \s{C}$ \}\}\\
\end{tabbing} 
\vspace{-20pt}
\caption{Main loop}
\label{fig:mloop}
\end{wrapfigure}

The main loop of \Apqe is shown in Fig.~\ref{fig:mloop}. \Apqe accepts
formulas $F_1(X,Y),F_2(X,Y)$ and set $X$ and outputs
formula $F^*_1(Y)$ such that $\prob{X}{F_1 \wedge F_2} \equiv F^*_1
\wedge \prob{X}{F_2}$.  We use symbol '$\del$' to separate
in/out-parameters and in-parameters. For instance, the line
$\Apqe(F_1,F_2 \del X)$ means that formulas $F_1,F_2$ \ti{change} by
\Apqe (via adding/removing clauses) whereas $X$ does not.

\Apqe first initializes the set \ti{Ds} of learned D-sequents. It
starts an iteration of the loop with picking an $X$-clause $C \in F_1$
(line 3). If every clause of $F_1$ contains only variables of $Y$,
then $F_1$ is a solution $F^*_1(Y)$ to the PQE problem above (line
4). Otherwise, \Apqe invokes a procedure called \prd to prove $C$
redundant. This may require adding new clauses to $F_1$ and $F_2$.  In
particular, \prd may add to $F_1$ new $X$-clauses to be proved
redundant in later iterations of the loop. Finally, \Apqe removes $C$
from $F_1$ (line 6).
%
%
\subsection{Description of \prd procedure}
\label{ssec:prv_red}
%
%
\setlength{\intextsep}{4pt}
\setlength{\textfloatsep}{10pt}
\begin{wrapfigure}{L}{1.6in}
\small
\vspace{-5pt}
\begin{tabbing}
aa\=bb\= cc\= dd\= \kill
// $\eta$ denotes $(Q,\pnt{a},T,\Sub{C}{trg})$ \\
// $\xi$ denotes $(F_1,F_2,\mi{Ds},X)$ \\
// $\phi$ denotes $(F_1,F_2,\mi{Ds},\pnt{a},T)$ \\
// \\
$\mi{PrvRed}(F_1,F_2,\mi{Ds} \del \Sub{C}{pr},X)$\{\\
\tb{\scriptsize{1}}\> $T := \mi{InitStack}(\Sub{C}{pr})$  \\
\tb{\scriptsize{2}}\> $\pnt{a} := \emptyset$; $Q := \emptyset$ \\
\tb{\scriptsize{3}}\> $\Sub{C}{trg} := \Sub{C}{pr}$ \\
$~~~-----$\\
\tb{\scriptsize{4}}\> while ($\mi{true}$) \{\\
\tb{\scriptsize{5}}\Tt if $(Q = \emptyset)$ \{ \\
\tb{\scriptsize{6}}\ttt  $(v,b)\!:=\!Assgn\!(F_1,\!F_2,\!X)$\\
\tb{\scriptsize{7}}\ttt  $\mi{UpdQueue}(Q \del v,b)$ \} \\
\tb{\scriptsize{8}}\Tt $(\mi{ans},C',S')\!:=\!\mi{BCP}(\eta \del \xi)$\\
\tb{\scriptsize{9}}\Tt if $(\mi{ans} = \mi{NoBcktr})$ \\
\tb{\scriptsize{10}}\ttt  continue\\
$~~~-----$\\
\tb{\scriptsize{11}}\Tt $(S,C)\!:=\!\mi{Lrn}(\!\mi{ans},\!\phi,\!C',\!S'\!)$\\
\tb{\scriptsize{12}}\Tt $\mi{Store}(F_1,F_2,\mi{Ds} \del S,C)$ \\
\tb{\scriptsize{13}}\Tt if $(\Sub{C}{trg} = \Sub{C}{pr})$ \{\\
\tb{\scriptsize{14}}\ttt  $\mi{RegBcktr}(\pnt{a},T \del S,C)$ \\
\tb{\scriptsize{15}}\ttt if $(\mi{Cond}(S)=\emptyset)$ return \\
\tb{\scriptsize{16}}\ttt $\mi{UpdQueue}(Q\del \pnt{a},S,C)$  \\
\tb{\scriptsize{17}}\ttt continue \} \\
$~~~-----$\\
\tb{\scriptsize{18}*}\Tt $\mi{SpecBcktr}(\pnt{a},T \del S,C)$ \\
\tb{\scriptsize{19}*}\Tt if $(\neg\mi{TrgDone}(T))$ \{ \\
\tb{\scriptsize{20}*}\ttt $\mi{UpdQueue}(\!Q\del\!\pnt{a},\!S,\!C)$  \\
\tb{\scriptsize{21}*}\ttt continue \} \\
\tb{\scriptsize{22}*}\Tt $(\!\Sub{C}{trg},S) := \mi{NewTrg}(\phi\del)$ \\
\tb{\scriptsize{23}*}\Tt if $(\Sub{C}{trg} \neq \Sub{C}{pr})$ continue\\
\tb{\scriptsize{24}*}\Tt if $(\mi{Cond}(S)=\emptyset)$ return \\
\tb{\scriptsize{25}*}\Tt $\mi{UpdQueue}(Q\del \pnt{a},S)$\}\}\\
\end{tabbing} 
\vspace{-15pt}
\caption{The \prd procedure}
\label{fig:prv_red}
\end{wrapfigure}
 The pseudo-code of \prd is shown in
Fig~\ref{fig:prv_red}. The objective of \prd is to prove a clause of
$F_1$ redundant. We will refer to this clause as the \ti{primary
  target} and denote it as \Sub{C}{pr}. To prove \Sub{C}{pr}
redundant, \prd, in general, needs to prove redundancy of other
$X$-clauses called \ti{secondary targets}. At any given moment, \prd
tries to prove redundancy of only one $X$-clause. If a new secondary
target is selected, the current target is pushed on a stack $T$ to be
finished later. How \Apqe manages secondary targets is described in
Section~\ref{sec:trg_change}.  (The lines of code relevant to this
part of \Apqe are marked in Fig.~\ref{fig:prv_red} and~\ref{fig:bcp}
\ti{with an asterisk}.)

First, \prd initializes its variables (lines 1-3). The stack $T$ of
the target $X$-clauses is initialized to \Sub{C}{pr}. The current
assignment \pnt{a} to $X \cup Y$ is initially empty. So is the
assignment queue $Q$. The current target clause \Sub{C}{trg} is set to
\Sub{C}{pr}. The main work is done in a while loop that is similar to
the main loop of a SAT-solver~\cite{grasp}. In particular, \prd uses
the notion of a \ti{decision level}. The latter consists of a decision
assignment and implied assignments derived by BCP. Decision level
number 0 is an exception: it consists only of implied assignments.
BCP derives implied assignments from unit clauses \ti{and} from unit
D-sequents (see Subsection \ref{ssec:sngl_dseq_reuse}).

 The operation of \prd in the while loop can be partitioned into three
 parts identified by dotted lines. The first part (lines 5-10) starts
 with checking if the assignment queue $Q$ is empty. If so, a new
 assignment $v\!=\!b$ is picked (line 6) where $v\!\in\!(X \cup Y)$
 and $b \in \s{0,1}$ and added to $Q$. \prd first
 assigns\footnote{The reason for making decision assignments on variables of $Y$ before
those of $X$ is as follows. The final goal of PQE is to derive clauses
$C(Y)$ making $F_1$ redundant in \prob{X}{F_1 \wedge F_2}. Giving
preference to variables of $Y$ simplifies generation of such clauses.
If a conflict occurs at a decision level started by an assignment to
variable $v \in Y$, one can easily derive a conflict clause depending
only on variables of $Y$.

} the variables of $Y$. So $v \in
 X$, only if all variables of $Y$ are assigned. Then \prd calls the
 \ti{BCP} procedure. If \ti{BCP} identifies a backtracking condition,
 \prd goes to the second part.  (This means that \Sub{C}{trg} is
 proved redundant in the subspace \pnt{a}. In particular, a
 backtracking condition is met if \ti{BCP} falsifies a clause $C'$ or
 activates a D-sequent $S'$ learned earlier.)  Otherwise, \prd begins
 a new iteration.

 \prd starts the second part (lines 11-17) with generating a conflict
 clause $C$ or a new D-sequent $S$ for \Sub{C}{trg} (line 11). Then
 \prd stores $S$ in \ti{Ds} (if it is worth reusing) or adds $C$ to
 $F_1 \wedge F_2$. If a clause of $F_1$ is used in generation of $C$,
 the latter is added to $F_1$. Otherwise, $C$ is added to $F_2$.  If
 the current target is \Sub{C}{pr}, one uses ``regular'' backtracking
 (lines 13-14, see Subsection~\ref{ssec:reg_bcktr}). If the
 conditional of $S$ is empty, \prd terminates (line 15). Otherwise, an
 assignment derived from $S$ or $C$ is added to \pnt{a} (line
 16). This derivation is possible because after backtracking, the
 generated conflict clause $C$ (or the D-sequent $S$) becomes
 \ti{unit}. If the assignment above is derived from $S$, \prd keeps
 $S$ until the decision level of this assignment is eliminated (even
 if $S$ is not stored in \ti{Ds}). The third part (lines 18-25) is
 described in Section~\ref{sec:trg_change}.
 %
 \subsection{BCP}
 \label{ssec:bcp}
 The main loop of \ti{BCP} consists of the three parts shown in
 Fig.~\ref{fig:bcp} by dotted lines. (Parameters $\eta$ and $\xi$ are
 defined in Fig.~\ref{fig:prv_red}.) In the first part (lines 2-9),
 \ti{BCP} extracts an assignment $w\!=\!b$ from the assignment queue
 $Q$ (line 2). It can be a decision assignment or one derived from a
 clause $C$ or D-sequent $S$.  Then, \ti{BCP} updates the current
 assignment \pnt{a} (line 9). Lines 3-8 are explained in
 Subsection~\ref{ssec:sec_targ}.

%
%
%
\setlength{\intextsep}{4pt}
\setlength{\textfloatsep}{10pt}
\begin{wrapfigure}{l}{1.6in}
\small
\begin{tabbing}
aa\=bb\= cc\= dd\= \kill
$\mi{BCP}(\eta \del \xi)$ \{ \\
\tb{\scriptsize{1}}\> while $(Q \neq \emptyset)$ \{ \\
\tb{\scriptsize{2}}\Tt $(w,b,C,S) := \mi{Pop}(Q\del)$  \\
\tb{\scriptsize{3}*}\Tt if $(C = \Sub{C}{trg})$ \{\\
\tb{\scriptsize{4}*}\ttt  $C':=\mi{BCP}^*(\eta \del \xi,w,b)$ \\
\tb{\scriptsize{5}*}\ttt  $\Sub{C}{trg}:= \mi{NewTrg}(T)$ \\
\tb{\scriptsize{6}*}\ttt  if ($C' \neq \mi{nil}$) \\
\tb{\scriptsize{7}*}\tttt   return($\mi{FlsCls},C',\mi{nil}$) \\
\tb{\scriptsize{8}*}\ttt  break; \} \\
\tb{\scriptsize{9}}\Tt $\mi{UpdAssgn}(\pnt{a} \del w,b,C,S)$ \\
$~~~-----$ \\
\tb{\scriptsize{10}}\Tt if $(\mi{Satisf}(\Sub{C}{trg},w,b))$ \\
\tb{\scriptsize{11}}\ttt return($\mi{SatTrg},\mi{nil},\mi{nil}$)\}   \\
\tb{\scriptsize{12}}\Tt $C'\!:=\!\mi{ChkCls}(\!Q\!\del\!F_1,\!F_2,\!w,\!b)$   \\
\tb{\scriptsize{13}}\Tt if $(C' \neq \mi{nil})$ \\
\tb{\scriptsize{14}}\ttt  return($\mi{FlsCls},C',\mi{nil}$)\\
\tb{\scriptsize{15}}\Tt $S'\!:=\!\mi{ChkDsq}(Q\!\del\!\mi{Ds},\!w,\!b)$   \\
\tb{\scriptsize{16}}\Tt if $(S'\neq \mi{nil})$ \\
\tb{\scriptsize{17}}\ttt  return($\mi{ActDseq},\mi{nil},S'$)\\
$~~~-----$ \\
\tb{\scriptsize{18}}\>\,\,if $(\mi{Blocked}(\Sub{C}{trg},\pnt{a},F_1,F_2))$ \\
\tb{\scriptsize{19}}\Tt return($\mi{BlkTrg},\mi{nil},\mi{nil}$)\}\\
\tb{\scriptsize{20}}\>\,\,return($\mi{NoBcktr},\mi{nil},\mi{nil}$)\} \\
\end{tabbing} 
\vspace{-20pt}
\caption{The \ti{BCP} procedure}
\label{fig:bcp}
\end{wrapfigure}

 In the second part (lines 10-17), \ti{BCP} first checks if the
 current target clause \Sub{C}{trg} is satisfied by $w=b$. If so,
 \ti{BCP} terminates returning the backtracking condition \ti{SatTrg}
 (line 11). Then \ti{BCP} identifies the clauses of
 \mbox{$F_1\!\wedge\!F_2$} satisfied or constrained by $w\!=\!b$
 (line~12). If a clause becomes unit, \ti{BCP} stores the assignment
 derived from this clause in $Q$. If a falsified clause $C'$ is found,
 \ti{BCP} terminates (lines 13-14). Otherwise, \ti{BCP} checks the
 \ti{applicable} D-sequents of \ti{Ds} stating the redundancy of
 \Sub{C}{trg} (line 15). If such a D-sequent became unit, the
 deactivating assignment is added to $Q$ (see
 Subsection~\ref{ssec:sngl_dseq_reuse}).  If an active D-sequent $S'$
 is found, \ti{BCP} terminates (lines 16-17).

 Finally, \ti{BCP} checks if \Sub{C}{trg} is blocked (lines 18-19). If
 not, \ti{BCP} reports that no backtracking condition is met (line
 20).
 %
 %
 \subsection{D-sequent generation}
 \label{ssec:dseq_gen}
 When \ti{BCP} reports a backtracking condition, the \lrn procedure
 (line 11 of Fig~\ref{fig:prv_red}) generates a conflict clause $C$ or
 a D-sequent $S$. \lrn generates a conflict clause when \ti{BCP}
 returns a falsified clause $C'$ and every implied assignment used by
 \lrn to construct $C$ is derived from a \ti{clause}~\cite{grasp}.
 Adding $C$ to $F_1 \wedge F_2$ makes the current target clause
 \Sub{C}{trg} redundant in subspace \pnt{a}.
 Otherwise\footnote{There is one case where \ti{Lrn} generates a D-sequent \ti{and} a
clause (see Appendix~\ref{ssec:dseq_and_clause}).
}, \lrn generates a D-sequent $S$
 for \Sub{C}{trg}. The D-sequent $S$ is built similarly to a conflict
 clause $C$. First, \lrn forms an initial D-sequent $S$ equal to
 \ods{q}{H}{\Sub{C}{trg}} (unless an existing D-sequent is activated
 by \ti{BCP}). The conditional \pnt{q} and structure constraint $H$ of
 $S$ depend on the backtracking condition returned by \ti{BCP}. If
 \pnt{q} contains assignments \ti{derived} at the current decision
 level, \lrn tries to get rid of them as it is done by a SAT-solver
 generating a conflict clause. Only instead of resolution, \lrn uses
 the join operation. Let $w=b$ be the assignment of \pnt{q} derived at
 the current decision level where $b \in \s{0,1}$. If it is derived
 from a D-sequent $S'$ equal to \ods{q'}{H'}{\Sub{C}{trg}}, \lrn joins
 $S$ and $S'$ at $w$ to produce a new D-sequent $S$. If $w=b$ is
 derived from a clause $B$, \lrn joins $S$ with the atomic D-sequent
 $S'$ of the second kind stating the redundancy of \Sub{C}{trg} when
 $B$ is falsified.  $S'$ is equal to \ods{q'}{H'}{\Sub{C}{trg}} where
 \pnt{q'} is the shortest assignment falsifying $B$ and $H'= \s{B}$.
 \lrn keeps joining D-sequents until it builds a D-sequent $S$ whose
 conditional does not contain assignments derived at the current
 decision level (but may contain the \ti{decision} assignment of this
 level). Appendix~\ref{app:dseq_gen} gives examples of D-sequents
 built by \lrn.
 
 %
 %
 \subsection{Regular backtracking}
 \label{ssec:reg_bcktr}
 If \Sub{C}{trg} is the primary target \Sub{C}{pr}, \prd calls the
 backtracking procedure \ti{RegBcktr} (line 14 of
 Fig.~\ref{fig:prv_red}).  If \lrn returns a conflict clause $C$,
 \ti{RegBcktr} backtracks to the smallest decision level where $C$ is
 still unit. So an assignment can be derived from $C$. (This is how a
 SAT-solver with conflict clause learning backtracks~\cite{grasp}.)
 Similarly, if \lrn returns a D-sequent $S$, \ti{RegBcktr} backtracks
 to the smallest decision level where $S$ is still unit. So an
 assignment can be derived from $S$.

\section{Using Secondary-Target Clauses}
\label{sec:trg_change}
The objective of \prd (see Fig.~\ref{fig:prv_red}) is to prove the
primary target clause \Sub{C}{pr} redundant. To achieve this goal,
\prd may need to prove redundancy of so-called secondary target
clauses.  In this section, we describe how this is done.
%
%
\subsection{The reason for using secondary targets}
\label{ssec:reason}
Let \prob{X}{F_1(X,Y) \wedge F_2(X,Y)} be an \ecnf formula. Assume
that \prd tries to prove redundancy of the clause $\Sub{C}{pr} \in
F_1$ where $\Sub{C}{pr} = y_1 \vee x_2$. Suppose that \pnt{a} is the
current assignment to $X \cup Y$ and $y_1$ is assigned 0 in \pnt{a}
whereas $x_2$ is not assigned yet. Since \Sub{C}{pr} is falsified in
subspace $\pnt{a} \cup \s{x_2=0}$, the assignment $x_2=1$ is derived
by BCP. However, the goal of \prd is to prove \Sub{C}{pr}
\ti{redundant} rather than satisfy $F_1 \wedge F_2$. The fact that
\Sub{C}{pr} is falsified in a subspace says nothing about its
redundancy in this subspace.

To address the problem above, \prd explores the subspace $\pnt{a} \cup
\s{x_2=1}$ to prove redundancy of the clauses of $F_1 \wedge F_2$
\ti{resolvable} with \Sub{C}{pr} on $x_2$. These clauses are called
\ti{secondary targets}.  Proving their redundancy results in proving
redundancy of \Sub{C}{pr}. If $F_1 \wedge F_2$ is \ti{unsatisfiable}
in the subspace $\pnt{a} \cup \s{x_2=1}$, \prd generates a conflict
clause that does not depend on $x_2$. Adding it to $F_1\wedge F_2$
makes \Sub{C}{pr} redundant in the subspace \pnt{a} and an atomic
D-sequent of the second kind is built. If $F_1 \wedge F_2$ is
\ti{satisfiable} in the subspace $\pnt{a} \cup \s{x_2=1}$, \prd simply
proves redundancy of the secondary targets. Then \Sub{C}{pr} is
\ti{blocked} at variable $x_2$ and an atomic D-sequent of the third
kind is generated stating the redundancy of \Sub{C}{pr} in the
subspace \pnt{a}.

The same strategy is used for \ti{every current} target clause
\Sub{C}{trg} (secondary or primary). Whenever \Sub{C}{trg} becomes
unit, \prd generates new secondary targets to be proved redundant.
These are the clauses of $F_1 \wedge F_2$ resolvable with \Sub{C}{trg}
on the variable that is currently unassigned in \Sub{C}{trg}.

%
%
\subsection{Generation of secondary targets}
\label{ssec:sec_targ}
To keep track of secondary targets \prd maintains a stack $T$ of
target levels. (Appendix~\ref{app:trg_lvls} gives an example of how
$T$ is updated.) The bottom level of $T$ consists of the primary
target clause \Sub{C}{pr}.  All other levels are meant for secondary
targets. Every such a level is specified by a pair $(C,w)$ where $C$
is either \Sub{C}{pr} or a secondary target clause and $w \in X$ is a
variable of $C$. They are called the \tb{key clause} and the \tb{key
  variable} of this level.  The secondary targets specified by this
level are the clauses of $F_1 \wedge F_2$ resolvable with $C$ on
$w$. The top level of $T$ specifies the current target clause
\Sub{C}{trg}. Namely, \Sub{C}{trg} is resolvable with the key clause
$C$ on the key variable $w$ of the top level of $T$. Once \Sub{C}{trg}
is proved redundant, another clause resolvable with $C$ on $w$ and not
proved redundant yet is chosen as the new target.

\ti{BCP} picks assignment $w=b$ derived from \Sub{C}{trg} \ti{only} if
$Q$ does not contain any other assignments (line 2 of
Fig.~\ref{fig:bcp}).  Lines (4-8) show what happens next. First, the
$\mi{BCP}^*$ procedure is called to make the assignment
$w=b$. $\mi{BCP}^*$ is similar to BCP of a SAT-solver: it derives
assignments only from clauses and returns a falsified clause $C'$ if a
conflict occurs.  The only difference is that every time $\mi{BCP}^*$
finds a unit clause $C$, a new target level of $T$ is generated. (The
reason is that every unit clause produced by $\mi{BCP}^*$ is
resolvable either with \Sub{C}{trg} or with some secondary target
generated by $\mi{BCP}^*$.)  Let $w$ be the unassigned variable of
$C$. Then this level is specified by the pair $(C,w)$.  It consists of
the clauses of $F_1 \wedge F_2$ resolvable with $C$ on $w$. On
completing $\mi{BCP}^*$, a new \Sub{C}{trg} is chosen among the
clauses of the top level of $T$ (line 5). If a conflict occurred
during $\mi{BCP}^*$, the \ti{BCP} procedure terminates (lines
6-7). Otherwise, the main loop of \ti{BCP} terminates (line 8).

%
%
\subsection{Special backtracking}
\label{ssec:spec_bcktr}
\prd uses a special backtracking procedure called \ti{SpecBcktr} (line
18 of Fig.~\ref{fig:prv_red}) if the current target is not the primary
target \Sub{C}{pr}. (An example of special backtracking is given in
Appendix~\ref{app:spec_bcktr}.) If \ti{Lrn} returns a conflict clause
$C$ (line 11), \ti{SpecBcktr} backtracks in the same manner as
\ti{RegBcktr} (line 14). Namely, it jumps to the smallest decision
level where $C$ is still unit.  The difference is that \ti{SpecBcktr}
also eliminates the target levels of $T$ that are jumped over. Namely,
if the key variable $w$ of a target level is unassigned by
\ti{SpecBcktr}, this level is eliminated. (Adding $C$ makes \ti{all}
$X$-clauses of $F_1 \wedge F_2$ redundant in the current subspace. So
proving redundancy of secondary targets with variable $w$ is not
needed anymore.)

Suppose \ti{Lrn} returns a D-sequent $S$. Since $S$ states redundancy
of \Sub{C}{trg}, the scope of \ti{SpecBcktr} is \ti{limited} to the
variables assigned \ti{after} the key variable $w$ of the top target
level of $T$ (to which \Sub{C}{trg} belongs). If some variables
assigned \ti{after} $w$ remain assigned on completing \ti{SpecBcktr},
proving \Sub{C}{trg} redundant is not over yet.  In this case, \prd
adds the assignment derived from $S$ to the queue $Q$ and starts the
next iteration of the while loop (lines 19-21 of
Fig.~\ref{fig:prv_red}).  Otherwise, \Sub{C}{trg} is proved redundant
up to the point of origin and \prd calls \ti{NewTrg} to look for a new
target (line 22). Namely, \ti{NewTrg} looks for a clause resolvable
with the key clause $C$ on the key variable $w$ of the top level of
$T$ that is not proved redundant yet.

 If \ti{NewTrg} fails to find a target in the top level of $T$, $C$ is
 blocked at $w$. Then \ti{NewTrg} generates a D-sequent for $C$ and
 deletes the top level of $T$. This entails returning the clauses of
 this level proved redundant back in $F_1 \wedge F_2$ and unassigning
 $w$. Then \ti{NewTrg} looks for a target in the \ti{new} top level of
 $T$ and so on.  If \ti{NewTrg} finds \Sub{C}{trg} that is not the
 primary target \Sub{C}{pr}, \prd starts a new iteration of the while
 loop (line 23). Otherwise, \ti{NewTrg} sets \Sub{C}{trg} to
 \Sub{C}{pr}.  It also returns a D-sequent $S$ for \Sub{C}{pr} since
 \Sub{C}{pr} is blocked. If the conditional of $S$ is empty,
 \Sub{C}{pr} is redundant unconditionally and \prd terminates (line
 24). Otherwise, the assignment derived from $S$ is added to $Q$ and a
 new iteration begins (line 25).

\section{Experimental Results}
\label{sec:exper}
In this section, we evaluate an implementation of
\Apqe. (Appendix~\ref{app:exper} provides more experimental data).
Our preliminary experiments showed that structure constraints can grow
very large, which makes storing D-sequents expensive. In
Appendix~\ref{app:str_con}, we discuss various methods of dealing with
this problem. In our experiments, we used the following idea.  One can
reduce the size of structure constraints by storing/reusing \ti{only}
D-sequents for the target clauses of $k$ bottom levels of the stack
$T$. In particular, one can safely reuse the D-sequents for the
primary target clause ($k=0$) without computing structure constraints
at all (see Appendix~\ref{app:str_con}).

For the evaluation of \Apqe, we use Circuit-SAT, the first problem
listed in Section~\ref{sec:appl_pqe}. We consider this problem in the
form repeatedly solved in IC3~\cite{ic3}: given a state \pnt{z}, find
the states from which \pnt{z} can be reached in one
transition. In~\cite{pdr}, it was suggested to look for the largest
\ti{subset} of these states forming a \ti{cube}. In this section, we
use a variation of this problem for evaluation of \Apqe. We
demonstrate that \Apqe dramatically outperforms \apqe
of~\cite{hvc-14}.

We also compare \Apqe with two SAT-based methods.  On examples with
deterministic Transition Relations (TRs), both methods are faster than
\Apqe. However, method~1 shows poorer results (in terms of cube size).
Method 2 is comparable with \Apqe in terms of cube size but is, in
general, inapplicable to \ti{non-deterministic} TRs. (A deterministic
TR is specified by a deterministic circuit $N$ i.e. an input to $N$
produces only one output. A deterministic TR can become
non-deterministic e.g.  after pre-processing~\cite{prepr} performed to
speed up the SAT-checks of IC3).  Importantly, no optimization
techniques are used in our implementation of \Apqe yet.  So its
performance can be dramatically improved.

Let $N(X,Y,Z)$ be a combinational circuit where $X$,$Y$ and $Z$ are
sets of input, internal and output variables respectively. Let \pnt{z}
be a full assignment to $Z$. The problem we consider is to find the
input assignments for which $N$ evaluates to $\pnt{z}$.  Let
\Sub{C}{\pnt{z}} be the longest clause falsified by \pnt{z}. Let
$F(X,Y,Z)$ be a CNF formula specifying $N$. Let $W$ denote $Y \cup
Z$. As we mentioned in Subsection~\ref{ssec:circ_sat}, the problem
above reduces to finding $G(X)$ such that \mbox{$G \wedge \prob{W}{F}
  \equiv$ \prob{W}{\Sub{C}{\pnt{z}} \wedge F}} i.e. to PQE. We assume
here that $N$ produces at least one output for every input. So,
$\prob{W}{F} \equiv 1$ and $G \equiv$ \prob{W}{\Sub{C}{\pnt{z}} \wedge
  F}. If $C \in G$, then $\overline{C}$ specifies a cube of input
assignments for which $N$ evaluates to \pnt{z}. So, a shorter clause
$C$ specifies a larger set of input assignments producing output
\pnt{z}.

%
%
\begin{table}[ht]
  \small
\caption{Taking \Sub{C}{\pnt{z}} out of the scope of quantifiers in
  \prob{W}{\Sub{C}{\pnt{z}} \wedge F}. The time limit is 100 seconds.}
\scriptsize
\begin{center}
\begin{tabular}{|p{42pt}|p{12pt}|p{22pt}|p{10pt}|p{14pt}|p{10pt}|p{14pt}|p{10pt}|} \hline
    name     &\#inps& \multicolumn{2}{c|}{\apqe} & \multicolumn{2}{c|}{\Apqe}& \multicolumn{2}{c|}{\Apqe}  \\
          & &\multicolumn{2}{c|}{} & \multicolumn{2}{c|}{no learning} & \multicolumn{2}{c|}{limited learning}\\ \cline{3-8}
         & & \#dseqs      & time      & \#dseqs & time   &\#dseqs & time    \\
         &     & $\times 10^3$ &(s.)  & $\times 10^3$    & (s.)  & $\times 10^3$ & (s.) \\ \hline
pdtvisheap00     & 37 & 231    & 3.4   & 0.9  & 0.03     & 0.1  & \tb{0.02}  \\\hline
texasifetch1p1   & 87 & $>$416    & $*$   & 6.8  & \tb{0.4} & 0.7 & \tb{0.4}  \\\hline
pdtpmsretherrtf  & 93 & $>$10,128 & $*$   & 17   &  1       & 0.6 & \tb{0.05} \\\hline
pdtvisblackjack2 & 109 & $>$6,857 & $*$   & 690  & 61       & 226   & \tb{18}  \\\hline
pdtvisvsar04     & 147 & 2,507 &  68   & 6.5  & 0.9      & 0.4  & \tb{0.06} \\\hline
texaspimainp01   & 253 & $>$35.6  &  $*$  &  71  & 30       & 8.7 & \tb{3.4}  \\\hline
eijkbs3330       & 286 & $>$17    & $*$   &  2.6 & 0.7      & 0.2  & \tb{0.2} \\\hline
nusmvtcasp2      & 325 & $>$3,301 & $*$   &  22  & 1.4      & 3.4 & \tb{0.2} \\\hline
pdtvissfeistel   & 429 & $>$2,305 & $*$   &  $>$44  & $*$      & 7.3 & \tb{13}  \\\hline
pdtpmsvsa16a     & 453 & $>$1,742 & $*$   &  171 & 87       & 1.1 & \tb{1} \\\hline

\end{tabular}                
\end{center}
\label{tbl:mc}
\end{table}


First, we compared the performance of \apqe~\cite{hvc-14}, \Apqe with
no learning and \Apqe with limited learning (only for primary
targets). We used the transition relation of a HWMCC-10 benchmark as
circuit $N$. For the sake of simplicity, we ignored the difference
between latched and combinational input variables of $N$. (In the
context of model checking, the literals of combinational variables are
supposed to be \ti{dropped} from $C \in G$ to make $\overline{C}$ a
cube of \ti{states}.)  In Table~\ref{tbl:mc}, we give a sample of the
set of benchmarks we tried that shows the general trend. The first
column gives the name of a benchmark. The second column shows the
number of input variables of $N$. The remaining columns give the
number of generated D-sequents (in thousands) and the run time for
each PQE procedure. Table~\ref{tbl:mc} shows that \Apqe without
learning outperforms \apqe due to generating fewer D-sequents. For the
same reason, \Apqe with learning outperforms \Apqe without learning.

%
%
\begin{table}[ht]
  \small
\caption{Comparison with SAT-based methods (deterministic circuits)}
\scriptsize
\begin{center}
  \begin{tabular}{|p{44pt}|p{12pt}|p{14pt}|p{10pt}|p{14pt}|p{10pt}|p{14pt}|p{10pt}|} \hline
    name     &\#inps& \multicolumn{2}{c|}{\ti{SAT}} & \multicolumn{2}{c|}{\ti{SAT}}& \multicolumn{2}{c|}{\Apqe}  \\
          & &\multicolumn{2}{c|}{\ti{method 1}} & \multicolumn{2}{c|}{method 2} & \multicolumn{2}{c|}{limited learning}\\ \cline{3-8}
                &       & \#len\text{-} & time & \#len\text{-}  &time &\#len\text{-} & time    \\
                &       & gth   & (s.)      &  gth   & (s.)   &  gth  & (s.) \\ \hline
visemodel       & 26    & 22  & 0.1  & \tb{6}   & 0   & \tb{6}   & 0.01 \\\hline
bobcohdoptdcd4  & 62    & 53  & 0.1  &  46      & 0.1 & \tb{42}  & 0.1  \\\hline
eijkbs3330      & 286   & 155 & 0.2  & \tb{43}  & 0.1 & \tb{43}  & 0.2 \\\hline
pdtvissfeistel  & 429   & 363 & 1.3  &  \tb{1}  & 0.3 & \tb{1}   & 4.8 \\\hline
139464p0        & 1,002 & 992 & 3    & 625      & 3.3 & \tb{572} & 23 \\\hline


\end{tabular}                
\end{center}
\vspace{-5pt}
\label{tbl:det}
\end{table}

The formula $G(X)$ above can also be found by SAT.  We tried two
SAT-based methods using Minisat~\cite{minisat} as a SAT-solver.
Method 1 (inspired by~\cite{blocking_clause}) is essentially a
\ti{universal} QE algorithm whereas method 2 is applicable only for
formulas derived from \ti{deterministic} circuits. Method 1 looks for
an assignment (\pnt{x},\pnt{y},\pnt{z}) satisfying $G \wedge F \wedge
\Sub{U}{\pnt{z}}$. (Originally, $G = \emptyset$.)  Here
\Sub{U}{\pnt{z}} is the set of unit clauses specifying \pnt{z}. Then
it builds the smallest assignment (\pnt{x'},\pnt{y},\pnt{z}) where
$\pnt{x'} \subseteq \pnt{x}$ that still satisfies $G \wedge F \wedge
\Sub{U}{\pnt{z}}$. Finally, the longest clause \Sub{C}{\pnt{x'}}
falsified by \pnt{x'} is added to $G$ and a new satisfying assignment
is generated. Method~1 terminates when $G \wedge F \wedge
\Sub{U}{\pnt{z}}$ is unsatisfiable. Method~2 follows the idea employed
in advanced implementations of IC3/PDR~\cite{pdr,uns_lift}.  Namely,
it uses the satisfying assignment (\pnt{x},\pnt{y},\pnt{z}) above to
build formula $R$ equal to $\Sub{U}{\pnt{x}} \wedge F \wedge G \wedge
\Sub{C}{\pnt{z}}$. Here \Sub{U}{\pnt{x}} is the set of unit clauses
specifying \pnt{x}. If circuit $N$ is \ti{deterministic}, $R$ is
unsatisfiable. Then one extracts the subset of \Sub{U}{\pnt{x}} used
in the proof of unsatisfiability of $R$. The clause made up of the
negated literals of this subset is added to $G$. Then a new satisfying
assignment is generated (if any).

%
%
\begin{wraptable}{L}{2in}
  \small
\caption{Non-deterministic circuits}
\vspace{-15pt}
\scriptsize
\begin{center}
  \begin{tabular}{|p{30pt}|p{14pt}|p{10pt}|p{14pt}|p{10pt}|} \hline
    name      & \multicolumn{2}{c|}{\ti{SAT}}& \multicolumn{2}{c|}{\Apqe}  \\
              & \multicolumn{2}{c|}{method 1} & \multicolumn{2}{c|}{limited learning}\\ \cline{2-5}
                  & \#len\text{-}  &time &\#len\text{-}  & time    \\
                   & gth  & (s.)   & gth     & (s.) \\ \hline
visemodel        &   14     & 0.04 &   \tb{6}      & 0.01 \\\hline
bob..ptdcd4      &   47     & 0.1  &   \tb{42}     & 0.1  \\\hline
eijkbs3330       &   75     & 0.2  &   \tb{43}     & 0.2 \\\hline
pdt..sfeistel    &   140    & 1.1  &   \tb{1}      & 13  \\\hline
139464p0         &   748    & 3.4  &   \tb{577}    & 32 \\\hline      
\end{tabular}                
\end{center}
\vspace{-5pt}
\label{tbl:nondet}
\end{wraptable}

Table~\ref{tbl:det} compares the SAT-based methods above with \Apqe
(with limited learning) on 5 formulas showing the general trend. All
three methods were run until 1,000 clauses of $G$ were generated or
the problem was finished. We computed the length of the shortest
clause generated by each method and its run time. Table~\ref{tbl:det}
shows that the SAT-based methods are faster than \Apqe whereas the
best clause generated by method 2 and \Apqe is shorter than that of
method 1. So method 2 is the winner.

In Table~\ref{tbl:nondet} we repeat the same experiment for
\ti{non-deterministic} versions of circuits from Table~\ref{tbl:det}.
To make the original circuit $N$ non-deterministic, we just dropped a
fraction of clauses in the formula $F$ representing $N$.  A
non-deterministic circuit $N$ may produce different outputs for the
same input.  In this case, the formula $R$ above can be satisfiable,
which renders method 2 \ti{inapplicable}. Table~\ref{tbl:nondet} shows
that method~1 is still faster than \Apqe but generates much longer
clauses.

\section{Some Background}
\label{sec:bg}

In this section, we give some background on learning in branching
algorithms\footnote{Information about algorithms performing
  \ti{complete} QE for propositional logic can be found
  in~\cite{bryant_bdds1,bdds_qe} (BDD based)
  and~\cite{blocking_clause,fabio,cofactoring,cav09,cav11,cmu,nik1,nik2}
  (SAT-based).} used in verification. For such algorithms, it is
important to share information obtained in different subspaces. An
important example of such sharing is the identification of isomorphic
subgraphs when constructing a BDD~\cite{bryant_bdds1}. Another example
is SAT-solving with conflict driven
learning~\cite{grasp,chaff,minisat}. The difference between learning
in BDDs/SAT-solvers and D-sequents is that the former is
semantic\footnote{A BDD of a formula is just a compact representation of its truth
table. A conflict clause $C$ is implied by the formula $F$ from which
$C$ is derived and implication is a semantic property of $F$.

} whereas the latter is structural.

The appeal of finding structural properties is that they are
\ti{formula-specific}. So using such properties can give a dramatic
performance improvement. An obvious example of a structural property
is symmetry. In~\cite{symm_brk1,mark_symm,symm_brk2}, the
permutational symmetry of a CNF formula $F$ is exploited via adding
``structural implications''. By a structural implication of $F$, we
mean a clause $C$ that, in general, is not implied by $F$ but
preserves the \ti{equisatisfiability} of $F \wedge C$ to $F$. For
instance, to keep only one satisfying assignment (if any) out of a set
of \ti{symmetric} ones, symmetry-breaking clauses are added to $F$.

ATPG
is another area where formula structure is exploited. In ATPG
methods~\cite{abram}, one reasons about a circuit in terms of
\ti{signal propagation}. In the classic paper~\cite{lar}, signal
propagation is simulated in a CNF formula $F$ generated for
identifying a circuit fault. Formula $F$ specifies the functionality
of correct and faulty circuits. Additional variables and clauses are
added to $F$ to facilitate signal reasoning. These extra clauses, like
in formulas with symmetries, are ``structural implications''.

The difference between D-sequents and traditional methods of
exploiting formula structure is twofold. First, redundancy is a
\ti{very general} structural property.  For that reason, the machinery
of D-sequents can be applied to \ti{any CNF formula} (e.g. a random
CNF formula). Second, a traditional way to take into account structure
is to add some kind of structural implications and then run a
verification engine performing \ti{semantic} derivations (e.g. a
SAT-solver). The machinery of D-sequents is different in that it
performs structural derivations (namely, proving redundancy of clauses
with quantified variables) \ti{all the way} until some semantic fact
is established e.g. $F^*_1 \wedge$ \prob{X}{F_2} $\equiv$ \prob{X}{F_1
  \wedge F_2}.

Removal of redundant clauses is used in preprocessing procedures of
QBF-algorithms and SAT-solvers~\cite{prepr,blocked_qbf}.  Redundant
clauses are also identified in the inner loop of SAT-solving
(inprocessing)~\cite{inproc}.  These procedures identify
\ti{unconditional} clause redundancies by recognizing some situations
where such redundancies can be easily proved.

\section{Conclusions}
\label{sec:concl}
We consider Partial Quantifier Elimination (PQE) on propositional CNF
formulas with existential quantifiers.  In PQE, only a (small)
subformula is taken out of the scope of quantifiers. The appeal of PQE
is that in many verification problems one can use PQE instead of
\ti{complete} QE and the former can be dramatically more efficient.
Earlier, we developed a PQE algorithm based on the notion of clause
redundancy. Since redundancy is a structural property, reusing learned
information is not trivial. In this paper, we provide some theory
addressing this problem. Besides, we introduce a new PQE algorithm
that performs single-event backtracking. This algorithm bears some
similarity to a SAT-solver and facilitates reusing learned
information. We show experimentally that the new PQE algorithm is
dramatically faster than its predecessor.  We believe that reusing
learned information is an important step in making PQE practical.

\section{Directions For Future Research}
\label{sec:fut_dir}
In our future research we are planning to focus on the following two
directions. First, although \Apqe shows an obvious improvement over
\apqe, it is too complex. So, we will try to find a simpler version of
\Apqe that preserves its good performance. Second, we want to relax
the decision making constraint (quantified variables are assigned
before unquantified). As we know from the practice of SAT-solving, a
poor choice of branching variables may lead to a significant
performance degradation.

\bibliographystyle{plain}
\bibliography{short_sat,local,l1ocal_hvc}
\vspace{15pt}
\appendices
%
\section{Redundancy of a clause in a subspace}
\label{app:red_subsp}
In this appendix, we discuss the following problem.  Let \prob{X}{F}
be an \ecnf formula. Let $C$ be an $X$-clause of $F$ redundant in
\prob{X}{F} in subspace \pnt{q}. Let \pnt{r} be an assignment to \V{F}
where $\pnt{q} \subset \pnt{r}$. Intuitively, the clause $C$ should
remain redundant in the \ti{smaller} subspace specified by
\pnt{r}. However, this is not the case.  $\prob{X}{\cof{F}{q}} \equiv
\prob{X}{\cof{F}{q} \setminus \s{\cof{C}{q}}}$ does not imply
$\prob{X}{\cof{F}{r}} \equiv \prob{X}{\cof{F}{r} \setminus
  \s{\cof{C}{r}}}$ (see the example below).
\begin{example}
Let $G(X)$ be a satisfiable formula.  Then every $G^* \subseteq G$ is
redundant in \prob{X}{G}. Assume that $G$ is unsatisfiable in subspace
\pnt{r}. Then there is $G^* \subseteq G$ that is \ti{not} redundant in
\prob{X}{G} in subspace \pnt{r}. So $G^*$ is redundant in subspace
$\pnt{q}=\emptyset$ and is not redundant in subspace \pnt{r} where
$\pnt{q} \subset \pnt{r}$. An obvious problem here is that the formula
\cof{G}{r} does not preserve the structure of $G$ (in terms of
redundancy of clauses).
\end{example}

The problem above can be easily addressed by using a more
sophisticated notion of redundancy called \ti{virtual
  redundancy}~\cite{qe_learn}. (The latter is different from
redundancy specified by Definition~\ref{def:red_cls}.) However, this
would require adding more definitions and propositions. So, for the
sake of simplicity, in this paper, we assume that if $C$ is redundant
in \prob{X}{F} in a subspace \pnt{q} (where redundancy is specified by
Definition~\ref{def:red_cls}), it is also redundant in subspace
\pnt{r} if $\pnt{q} \subset \pnt{r}$ holds.

\section{Generation Of New D-sequents By Substitution}
\label{app:gen_new_dseqs}
In Section~\ref{sec:join_upd}, we recalled two methods of producing
new D-sequents. In this appendix, we describe one more procedure for
generating a new D-sequent. This procedure is to reshape the structure
constraint $H$ of a D-sequent by substituting a clause $C \in H$ with
the structure constraint of another D-sequent stating redundancy of
$C$.
%
%
\begin{proposition}
  \label{prop:repl}
Let \prob{X}{F} be an \ecnf formula. Let \oDs{\pnt{q'}}{H'}{C'} and
$(\pnt{q''}$, $H'')$ $\rightarrow C''$ be consistent D-sequents (see
Definition~\ref{def:cons_dseqs}). Let $C''$ be in $H'$. Then the
D-sequent \ods{q}{H}{C'} holds where $\pnt{q} = \pnt{q'} \cup
\pnt{q''}$,  $H\!=\!(H'\! \setminus\! \s{C''}) \cup H''$.
\end{proposition}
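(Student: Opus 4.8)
The plan is to prove Proposition~\ref{prop:repl} directly from Definition~\ref{def:dseq} by taking an arbitrary subspace $\pnt{r}$ containing $\pnt{q}$ and an arbitrary member formula candidate $\prob{X}{W}$ of the claimed D-sequent \ods{q}{H}{C'}, and showing that \cof{C'}{r} is redundant in it. Recall $\pnt{q}=\pnt{q'}\cup\pnt{q''}$ (well-defined since the two D-sequents are consistent, hence $\pnt{q'},\pnt{q''}$ are compatible) and $H=(H'\setminus\s{C''})\cup H''$. So suppose $H\cup\s{C'}\subseteq W\subseteq F$ and $\pnt{q}\subseteq\pnt{r}$; we must show \cof{C'}{r} is redundant in every \prob{X}{\cof{V}{r}} logically equivalent to \prob{X}{\cof{W}{r}} with $H\cup\s{C'}\subseteq V\subseteq F$.

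First I would handle the two cases of whether $C''$ is already in $W$. If $C''\in W$, then $H'\cup\s{C'}\subseteq W$ (since $H'\setminus\s{C''}\subseteq H\subseteq W$ and $C''\in W$), so $\prob{X}{W}$ is directly a member formula of \oDs{\pnt{q'}}{H'}{C'}, and since $\pnt{q'}\subseteq\pnt{q}\subseteq\pnt{r}$, the first D-sequent gives redundancy of \cof{C'}{r} in $\prob{X}{\cof{W}{r}}$ — and in fact in every equivalent \prob{X}{\cof{V}{r}} with $H'\cup\s{C'}\subseteq V\subseteq F$; one checks the required $V$-range $H\cup\s{C'}\subseteq V$ together with $C''\in V$ lands inside this range (which holds because $C''$ being redundant does not get us a smaller $V$ here — this case is easy). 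The substantive case is $C''\notin W$. Then set $W^+ := W\cup\s{C''}$. Now $H''\cup\s{C''}\subseteq W^+$ since $H''\subseteq H\subseteq W$; and $\pnt{q''}\subseteq\pnt{r}$; so the second D-sequent \noDs{\pnt{q''}}{H''}{C''} asserts that \cof{C''}{r} is redundant in $\prob{X}{\cof{W^+}{r}}$, i.e. $\prob{X}{\cof{W^+}{r}}\equiv\prob{X}{\cof{W}{r}}$. In particular $\prob{X}{\cof{W}{r}}$ and $\prob{X}{\cof{W^+}{r}}$ are logically equivalent. Also $W^+\supseteq H'\cup\s{C'}$ (since $H'\setminus\s{C''}\subseteq W$ and $C''\in W^+$), so $\prob{X}{W^+}$ is a member formula of the first D-sequent, and \oDs{\pnt{q'}}{H'}{C'} (with $\pnt{q'}\subseteq\pnt{r}$) yields that \cof{C'}{r} is redundant in \emph{every} \prob{X}{\cof{V}{r}} logically equivalent to $\prob{X}{\cof{W^+}{r}}$ with $H'\cup\s{C'}\subseteq V\subseteq F$. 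The key chaining step is then: given any $V$ equivalent to $\prob{X}{\cof{W}{r}}$ with $H\cup\s{C'}\subseteq V\subseteq F$, I want to land $V$ (or $V\cup\s{C''}$) in the applicable range of the first D-sequent relative to $W^+$; since $\prob{X}{\cof{W}{r}}\equiv\prob{X}{\cof{W^+}{r}}$, $V$ is also equivalent to $\prob{X}{\cof{W^+}{r}}$, and either $C''\in V$ already (so $H'\cup\s{C'}\subseteq V$, done) or we pass to $V\cup\s{C''}$ using that $C''$ is redundant (by the second D-sequent applied with structure constraint $H''\subseteq V$) to keep the equivalence class, then apply the first D-sequent.

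I expect the main obstacle to be the careful bookkeeping of the ``member formula'' ranges — specifically verifying, in the $C''\notin W$ branch, that an arbitrary equivalent $V$ with $H\cup\s{C'}\subseteq V\subseteq F$ can be brought (possibly after adjoining $C''$, and justifying that this preserves the equivalence class via the consistency and the second D-sequent) into the set of member formulas of \oDs{\pnt{q'}}{H'}{C'} relative to $\prob{X}{\cof{W^+}{r}}$, namely $\{V' : H'\cup\s{C'}\subseteq V'\subseteq F,\ \prob{X}{\cof{V'}{r}}\equiv\prob{X}{\cof{W^+}{r}}\}$. The consistency hypothesis is what guarantees $\pnt{q'}$ and $\pnt{q''}$ are compatible so that $\pnt{q}$ is well-defined and $\pnt{q}\subseteq\pnt{r}$ is equivalent to both $\pnt{q'}\subseteq\pnt{r}$ and $\pnt{q''}\subseteq\pnt{r}$; it also (via the ordering condition of Definition~\ref{def:cons_dseqs}) ensures that the two redundancies do not conflict, which is exactly what licenses adjoining/removing $C''$ freely. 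Everything else — the set-containment inclusions $H'\setminus\s{C''}\subseteq H$, $H''\subseteq H$, and the cofactor operation commuting with union of clause sets — is routine and I would state it without grinding through it. Since Proposition~\ref{prop:repl} is a purely ``structural-bookkeeping'' fact on top of Definition~\ref{def:dseq}, no new semantic content beyond the two given D-sequents and their consistency is needed.
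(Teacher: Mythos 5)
First, a caveat: the paper itself does not prove Proposition~\ref{prop:repl} --- it defers the proofs of all numbered propositions to the cited reference and only proves Proposition~\ref{prop:dseq_subsp} in Appendix~\ref{app:gen_new_dseqs} --- so your attempt can only be judged against Definition~\ref{def:dseq} and the surrounding machinery. Your skeleton is the natural one (bridge from a member formula $W$ of the new D-sequent to $W^+ = W \cup \s{C''}$ via the D-sequent for $C''$, then apply the D-sequent for $C'$ to $W^+$), but the argument stops one essential step short. Applying \oDs{\pnt{q'}}{H'}{C'} to $W^+$ (or to $V\cup\s{C''}$) gives you redundancy of $C'$ in \prob{X}{\cof{W^+}{q}}, whereas Definition~\ref{def:dseq} requires redundancy of $C'$ in \prob{X}{\cof{W}{q}}, i.e. $\prob{X}{\cof{W}{q}} \equiv \prob{X}{\cof{W\setminus\s{C'}}{q}}$. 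You pass from $W^+$ back to $W$ on the grounds that the two are logically equivalent (``to keep the equivalence class''), but that inference is exactly what the paper's whole setup forbids: redundancy is structural, and redundancy of a clause in one formula does not transfer to a logically equivalent formula (Subsection~\ref{ssec:old_dseq}). What your two steps actually yield is $\prob{X}{\cof{W}{q}} \equiv \prob{X}{\cof{W^+}{q}} \equiv \prob{X}{\cof{W^+\setminus\s{C'}}{q}}$; the missing link is a \emph{second} application of the D-sequent for $C''$, now to $W^+\setminus\s{C'}$, to drop $C''$ and reach $W\setminus\s{C'}$. That application is legitimate only if $H''\cup\s{C''} \subseteq W^+\setminus\s{C'}$, i.e. only if $C'\notin H''$ --- and this is precisely the concrete content of the consistency hypothesis here: since $C''\in H'$, the only order admissible in Definition~\ref{def:cons_dseqs} removes $C'$ before $C''$, which forces $C'\notin H''$ (and $C'\neq C''$). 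You gesture at consistency ``licensing adjoining/removing $C''$ freely'' but never extract this fact nor perform the final transfer, so the core of the proof is missing.

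Two further points. Your statement of what must be proved misquotes Definition~\ref{def:dseq}: the equivalence is to the reference formula \prob{X}{\cof{F}{q}}, and the member formula $W$ must itself satisfy $\prob{X}{\cof{W}{q}} \equiv \prob{X}{\cof{F}{q}}$; without that hypothesis you cannot verify the equivalence side-conditions needed when invoking the parent D-sequents on $W^+$ (they follow by sandwiching $W \subseteq W^+ \subseteq F$). Finally, both parent D-sequents are stated at their own conditionals $\pnt{q'},\pnt{q''}$, so using them under the larger assignment $\pnt{q}=\pnt{q'}\cup\pnt{q''}$ tacitly relies on the subspace-monotonicity convention of Appendix~\ref{app:red_subsp}; that is acceptable at this paper's level of rigor, but it should be said explicitly rather than folded into the notation.
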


The proposition below is an implication of Proposition~\ref{prop:repl}
to be used in Appendix~\ref{app:str_con}. Since this proposition is
not mentioned in~\cite{qe_learn}, we prove it here.
%
%
\begin{proposition}
\label{prop:dseq_subsp}
Let \prob{X}{F} be an \ecnf formula. Let $S$ be a D-sequent equal to
\ods{q}{H}{C}. Let \pnt{r} be an assignment satisfying at least one
clause of $H$ where $\pnt{q} \subset \pnt{r}$ holds. Then the
D-sequent $S'$ equal to \ods{r}{H'}{C} holds where $H'$ is obtained
from $H$ by removing the clauses satisfied by \pnt{r}.
\end{proposition}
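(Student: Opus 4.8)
The plan is to construct $S'$ out of $S$ by repeated use of Proposition~\ref{prop:repl}, substituting out of the structure constraint, one clause at a time, exactly those clauses of $H$ that \pnt{r}\, satisfies. The hypothesis that at least one clause $B$ of $H$ is satisfied by \pnt{r}\, is used in two ways: it guarantees there is a substitution to carry out, and it is what permits the conditional to grow from \pnt{q}\, to \pnt{r}. As a preliminary, I would observe that for every clause $B \in H$ satisfied by \pnt{r}, the D-sequent \Ods{X}{F}{r}{\emptyset}{B} holds: for any $W$ with $B \in W \subseteq F$ and $\prob{X}{\cof{W}{r}} \equiv \prob{X}{\cof{F}{r}}$, we have $\cof{B}{r} \equiv 1$ because \pnt{r}\, satisfies $B$, so removing \cof{B}{r} from \cof{W}{r} leaves an equivalent formula and \cof{B}{r} is redundant in \prob{X}{\cof{W}{r}}.

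Fix such a $B$ and write $S_B$ for \Ods{X}{F}{r}{\emptyset}{B}. The main step is to apply Proposition~\ref{prop:repl} to the pair $S$, $S_B$, with $S$ in the role of the D-sequent whose structure constraint contains the substituted clause (so $B$ plays the role of $C''$, lying in $H$, and $\emptyset$ plays the role of $H''$). Its hypothesis is that $S$ and $S_B$ are consistent in the sense of Definition~\ref{def:cons_dseqs}: the conditionals \pnt{q}\, and \pnt{r}\, are compatible because $\pnt{q} \subset \pnt{r}$; and the order ``$S$ first, then $S_B$'' is admissible, since after removing $C$ the formula \prob{X}{F \setminus \s{C}} still contains $B$ (as $B \in H \subseteq F \setminus \s{C}$) and hence is a member formula of $S_B$. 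Proposition~\ref{prop:repl} then gives a D-sequent for $C$ whose conditional is $\pnt{q} \cup \pnt{r}$ and whose structure constraint is $H \setminus \s{B}$; and $\pnt{q} \cup \pnt{r} = \pnt{r}$ because $\pnt{q} \subset \pnt{r}$. I would then iterate: at every later stage the current D-sequent already has conditional \pnt{r}\, and a structure constraint $H_{\mathrm{cur}} \subseteq H$ still containing some clause $B'$ satisfied by \pnt{r}; the same consistency check applies verbatim (\pnt{r}\, is trivially compatible with itself, and $B' \in H_{\mathrm{cur}} \subseteq F \setminus \s{C}$), and Proposition~\ref{prop:repl} strips $B'$ from the structure constraint while the conditional stays \pnt{r}\, (now $\pnt{r} \cup \pnt{r} = \pnt{r}$). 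Once no clause of $H$ satisfied by \pnt{r}\, remains, what is left is exactly \Ods{X}{F}{r}{H'}{C}, i.e. $S'$.

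The step I expect to need the most care is the first substitution, because it is the only one that changes two things at once: it deletes a clause from the structure constraint and it replaces the conditional \pnt{q}\, by the strictly larger \pnt{r}. It is worth spelling out why one cannot shortcut this by merely lengthening the conditional of $S$: the qualifier ``logically equivalent to \prob{X}{\cof{F}{q}}'' in Definition~\ref{def:dseq} is sensitive to the subspace, and logical equivalence in the finer subspace \pnt{r}\, need not entail it in the coarser subspace \pnt{q}, so the lengthening is \ti{not} sound on its own --- it becomes legitimate precisely because it is carried by the substitution of a clause of $H$ satisfied by \pnt{r}, which is what the hypothesis supplies. A minor point to state explicitly in the final write-up is that if some clause $B \in H$ to be removed is not an $X$-clause of $F$, the auxiliary D-sequent $S_B$ and the ordering argument still go through unchanged (they use nothing about $B$'s variables), so no separate case is needed; there, D-sequents are simply read in the mildly generalized sense already implicit in Proposition~\ref{prop:sat_cls}.
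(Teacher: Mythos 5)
Your proof is correct and follows essentially the same route as the paper's: for each clause of $H$ satisfied by \pnt{r}, build an auxiliary D-sequent with empty structure constraint, check consistency with $S$ using the order ``$S$ first, then the auxiliary D-sequent'', and strip the satisfied clauses one at a time by iterating Proposition~\ref{prop:repl}. The only real difference is how the conditional reaches \pnt{r}. The paper uses atomic D-sequents of the first kind (Proposition~\ref{prop:sat_cls}), whose conditionals are single assignments $(v_i=b_i)\subseteq\pnt{r}$; its iteration therefore ends with a conditional $\pnt{r'}\subseteq\pnt{r}$, and the final widening from \pnt{r'} to \pnt{r} is done by invoking the convention of Appendix~\ref{app:red_subsp} (redundancy in a subspace is assumed to persist in smaller subspaces, justified elsewhere via virtual redundancy). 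You instead give each auxiliary D-sequent the conditional \pnt{r} itself, verified directly from Definition~\ref{def:dseq} because the clause cofactors to $1$ under \pnt{r}, so every application of Proposition~\ref{prop:repl} already yields conditional $\pnt{q}\cup\pnt{r}=\pnt{r}$ and no final widening step is needed. This makes your finish slightly more self-contained, and it is consistent with your own (correct) observation that enlarging a conditional is not sound from Definition~\ref{def:dseq} alone, which is precisely why the paper has to lean on its Appendix-A assumption at that point. Your side remark about clauses of $H$ that are not $X$-clauses flags a technicality the paper glosses over in the same way (Proposition~\ref{prop:sat_cls} is likewise stated without the $X$-clause restriction), and treating it as a mild generalization of Definition~\ref{def:dseq} is fine.
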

\begin{proof}
  Let $C_1,\dots,C_k$ be the clauses of $H$ satisfied by \pnt{r}.  For
  each clause $C_i$, $1 \leq i \leq k$, one can build an atomic
  D-sequent $S_i$ of the first kind equal to $(\ppnt{q}{i},H_i)
  \rightarrow C_i$. Here $H_i = \emptyset$ and $\ppnt{q}{i} = (v_i =
  b_i)$ is an assignment satisfying $C_i$ where $\ppnt{q}{i} \subseteq
  \pnt{r}$. It is not hard to show that $S,S_1$ are consistent (see
  Definition~\ref{def:cons_dseqs}).  By applying
  Proposition~\ref{prop:repl}, one obtains a new D-sequent $S'$ equal
  to \ods{q'}{H'}{C} where $\pnt{q'} = \pnt{q} \cup \ppnt{q}{1}$ and
  $H' = H \setminus \s{C_1}$. Since the new D-sequent $S'$ is
  consistent with $S_2$, one can apply Proposition~\ref{prop:repl}
  again. Going on in such a manner, one obtains the D-sequent $S'$
  equal to \ods{r'}{H'}{C} where $\pnt{r'} = \pnt{q} \cup \ppnt{q}{1}
  \dots \cup \ppnt{q}{k}$ and $H' = H \setminus
  \s{C_1,\dots,C_k}$. Since $\pnt{r'} \subseteq \pnt{r}$, the
  D-sequent \ods{r}{H'}{C} holds as well (see
  Appendix~\ref{app:red_subsp})
\end{proof}

\section{Examples Of D-sequent Generation By \lrn}
\label{app:dseq_gen}
In this appendix, we give examples of how \lrn builds D-sequents. (The
only exception is Appendix~\ref{ssec:cnfl_cls} where we describe how a
conflict clause generated.) We continue using the notation of
Section~\ref{sec:Apqe}. In particular, we assume that \Apqe is applied
to take $F_1$ out of the scope of quantifiers in \prob{X}{F_1(X,Y)
  \wedge F_2(X,Y)}.

%
%
\subsection{Generation of a conflict clause}
\label{ssec:cnfl_cls}
Suppose that a clause \Sub{C}{fls} of $F_1 \wedge F_2$ is falsified in
\ti{BCP} (see Fig.~\ref{fig:bcp}). Let \pnt{a} be the current
assignment at the point of \ti{BCP} where \Sub{C}{fls} gets
falsified. (So \pnt{a} falsifies \Sub{C}{fls}). Then \lrn generates a
\ti{conflict clause} i.e.  a clause falsified by \pnt{a} in which
there is only one literal falsified at the conflict level.  This
clause is built as follows~\cite{grasp}. First, \lrn picks the last
literal \ti{Lit} of \Sub{C}{fls} falsified by \pnt{a}. \lrn takes the
clause $C$ from which the value falsifying \ti{Lit} was derived and
resolves it with \Sub{C}{fls} producing a new clause \Sub{C}{fls}
falsified by \pnt{a}.  Then \lrn again picks the last literal \ti{Lit}
of \Sub{C}{fls} falsified by \pnt{a}. This goes on until only one
literal of \Sub{C}{fls} is falsified by an assignment made at the
conflict level and this is the \ti{decision} assignment of this
level. At this point, \Sub{C}{fls} is a conflict clause that is added
to $F_1 \wedge F_2$.

 %
 %
\subsection{D-sequent generation when  current target is satisfied}
\label{ssec:sat_trg}
 Suppose that $Y = \s{y}$ and $F_1 \wedge F_2$ contains (among others)
 the clauses $C_1 = y \vee x_1$, $C_2 = x_1 \vee x_2$.  Suppose $C_2$
 is the current target clause and \prd makes the decision assignment
 $y=0$. \ti{BCP} finds out that $C_1$ is unit and derives the
 assignment $x_1=1$ satisfying $C_2$. So \ti{BCP} terminates reporting
 the backtracking condition \ti{SatTrg} (line 11 of
 Fig.~\ref{fig:bcp}). At this point the current assignment \pnt{a} is
 equal to $(y=0,x_1=1)$. Then \lrn builds a D-sequent $S$ as follows.
 It starts with the atomic D-sequent $S$ of the first kind equal to
 \ods{q}{H}{C_2} where $\pnt{q} = (x_1=1)$ and $H = \emptyset$. The
 D-sequent $S$ states that $C_2$ is satisfied by \pnt{q} and hence
 redundant in the subspace \pnt{q} (and so in the subspace
 \pnt{a}). The conditional \pnt{q} contains the assignment $(x_1=1)$
 derived from clause $C_1$.  \lrn gets rid of this assignment as
 described in Subsection~\ref{ssec:dseq_gen}. First, it forms the
 D-sequent $S'$ equal to \ods{q'}{H'}{C_2} where $\pnt{q'} =
 (y=0,x_1=0)$ and $H' = \s{C_1}$. This is an atomic D-sequent of the
 second kind stating the redundancy of $C_2$ in the subspace where
 $C_1$ is falsified. Then \lrn joins $S$ and $S'$ at variable $x_1$ to
 obtain a new D-sequent $S$ equal to \ods{q}{H}{C_2} where \pnt{q} =
 $(y=0)$ and $H = \s{C_1}$.  The conditional \pnt{q} of $S$ does not
 contain assignments \ti{derived} at the current decision level. So
 \lrn terminates returning $S$.

%
%
\subsection{D-sequent generation when  current target is blocked}
Suppose that $Y = \s{y}$ and $F_1 \wedge F_2$ contains (among others)
the clauses $C_1 = y \vee x_1$, $C_2 = x_1 \vee \overline{x}_2$, $C_3
= x_2 \vee x_3$. Suppose that $C_3$ is the current target clause and\
$C_2$ is the only clause of $F_1 \wedge F_2$ that can be resolved with
$C_3$ on $x_2$.

Suppose \prd made the assignment $y = 0$. By running \ti{BCP}, \prd
derives the assignment $x_1 = 1$ from clause $C_1$. This assignment
satisfies $C_2$, which makes the target clause $C_3$ blocked at
$x_2$. At this point, \lrn generates a D-sequent as follows. First, an
atomic D-sequent $S$ of the third kind is generated (see
Subsection~\ref{ssec:third_kind}). $S$ is equal to \ods{q}{H}{C_3}
where $\pnt{q} = (x_1=1)$, $H = \emptyset$.

The conditional \pnt{q} of $S$ contains the assignment $x_1\!=\!1$
derived at the current decision level. To get rid of it, \lrn joins
$S$ with the D-sequent $S' = $\ods{q'}{H'}{C_3} at $x_1$ where
$\pnt{q'}=(y=0,x_1=0)$, $H' = \s{C_1}$. This D-sequent states
redundancy of $C_3$ in the subspace where $C_1$ is falsified. After
joining $S$ and $S'$, one obtains a new D-sequent $S$ equal to
\ods{q}{H}{C_3} where $\pnt{q} = (y=0)$, $H = \s{C_1}$. The
conditional of $S$ does not contain assignments derived at the current
decision level. So $S$ is the final D-sequent returned by \lrn.

%
%
\subsection{D-sequent generation when a non-target clause is falsified}
Generation of a conflict clause $C$ implies that every literal
\ti{Lit} resolved out in the process of obtaining $C$ is falsified by
the assignment derived from a \ti{clause} (see
Appendix~\ref{ssec:cnfl_cls}).  Suppose that at least one such an
assignment is derived from a \ti{D-sequent}. Then \lrn \ti{cannot}
derive a clause implied by $F_1 \wedge F_2$ and falsified by the
current assignment \pnt{a}.  Instead, \lrn derives a \ti{D-sequent}
stating redundancy of the current target clause \Sub{C}{trg}. In this
subsection, we consider the case where the clause \Sub{C}{fls}
falsified by \ti{BCP} (i.e. the starting point of \lrn) is different
from \Sub{C}{trg}.  The next subsection considers the case where
\Sub{C}{fls}=\Sub{C}{trg}.

Suppose that $Y = \s{y}$ and $F_1 \wedge F_2$ contains (among others)
the clauses $C_1 = y \vee \overline{x}_1 \vee x_2$, $C_2 =
\overline{x}_1 \vee x_3$, $C_3= \overline{x}_2 \vee
\overline{x}_3$. Suppose that the current target clause is $C_4 \in
(F_1 \cup F_2)$. Suppose that the D-sequent $S^*$ equal to
\ods{q^*}{H^*}{C_4} was derived earlier where $\pnt{q^*} =
(y=0,x_1=0)$ and $H^* = \emptyset$. Assume that \prd makes the
decision assignment $y=0$. By running \ti{BCP}, \prd first derives
$x_1 = 1$ from $S^*$. This is due to the fact that $S^*$ becomes unit
under $y=0$ and $x_1=1$ \ti{deactivates} $S^*$ (see
Subsection~\ref{ssec:sngl_dseq_reuse}).  Then \prd derives $x_2=1$ and
$x_3=1$ from $C_1$ and $C_2$ respectively.  These two assignments
falsify $C_3$.

The final D-sequent stating redundancy of $C_4$ is built by \lrn as
follows. First, \lrn generates the D-sequent $S$ equal to
\ods{q}{H}{C_4} where $\pnt{q} = (x_2=1,x_3=1)$, $H = \s{C_3}$. It is
an atomic D-sequent of the second kind stating redundancy of $C_4$ in
the subspace where $C_3$ is falsified. The conditional \pnt{q} of $S$
contains assignments derived at the current decision level. So, \lrn
picks the most recent derived assignment of \pnt{q} i.e. $x_3=1$ and
gets rid of it. This is achieved by joining $S$ with the D-sequent
$S'$ equal to \ods{q'}{H'}{C_4} at variable $x_3$ where $\pnt{q'} =
(x_1=1,x_3=0)$, $H' = \s{C_2}$. The D-sequent $S'$ states redundancy
of $C_4$ in the subspace where $C_2$ is falsified.  The result of
joining $S$ and $S'$ is a new D-sequent $S$ equal to \ods{q}{H}{C_4}
where $\pnt{q}=(x_1=1,x_2=1)$, $H = \s{C_2,C_3}$.

\lrn again picks the most recent derived assignment of \pnt{q}
i.e. $x_2=1$.  Then it joins $S$ with the D-sequent $S''$ equal to
\ods{q''}{H''}{C_4} at variable $x_2$ where $\pnt{q''} =
(y=0,x_1=1,x_2=0)$, $H'' = \s{C_1}$. The D-sequent $S''$ states
redundancy of $C_4$ in the subspace where $C_1$ is falsified.  The
result of joining $S$ and $S''$ is a new D-sequent $S$ equal to
\ods{q}{H}{C_4} where $\pnt{q}=(y=0,x_1=1)$, $H = \s{C_1,C_2,C_3}$.

Finally, \lrn gets rid of the assignment $x_1=1$ derived from the
D-sequent $S^*$ above. To this end, \lrn joins $S$ with $S^*$ at
variable $x_1$ to produce the D-sequent $S$ equal to \ods{q}{H}{C_4}
where $\pnt{q}=(y=0)$, \mbox{$H = \s{C_1,C_2,C_3}$}. This is the final
D-sequent $S$ returned by \lrn.

%
%
\subsection{D-sequent generation when a target clause is falsified}
\label{ssec:dseq_and_clause}
In this subsection, we continue the topic of the previous subsection.
Here, we consider the case \Sub{C}{fls}=\Sub{C}{trg} i.e.  the current
target clause is falsified by \ti{BCP}. (As in the previous
subsection, we assume that at least one assignment that ``matters'' is
derived from a D-sequent rather than a clause.) Then \lrn still
derives a D-sequent $S$ stating redundancy of \Sub{C}{trg} but also
adds a new clause $C$. The latter is not a full-fledged conflict
clause: it may contain \ti{more than one literal} falsified at the
conflict level. \lrn has to add $C$ to $F_1 \wedge F_2$ to make
\Sub{C}{trg} redundant. \lrn derives $S$ and $C$ as follows. The
clause $C$ is built similarly to a conflict clause until \lrn reaches
an assignment derived from a D-sequent. Then \lrn builds $S$ by the
procedure described in the previous subsection where $C$ is used as a
``starting clause'' falsified by \pnt{a}.

Let us re-examine the example of the previous subsection under the
assumption that the falsified clause $C_3$ is also the current target
clause. \lrn resolves $C_3$ with $C_2$ (on variable $x_3$) and $C_1$
(on variable $x_2$) to produce the clause $C = y \vee
\overline{x}_1$. This clause is falsified by the current assignment
\pnt{a}.  Then \lrn builds the D-sequent $S$ equal to \ods{q}{H}{C_3}
where $\pnt{q} = (y=0,x_1=1)$ and $H = \s{C}$. This D-sequent states
the redundancy of $C_3$ in the subspace where the new clause $C$ is
falsified. Finally, \lrn gets rid of $x_1=1$ (derived from the
D-sequent $S^*$) in the conditional \pnt{q} of $S$. To this end, \lrn
joins $S$ with $S^*$ at variable $x_1$ to produce a new D-sequent $S$
equal to \ods{q}{H}{C_3} where $\pnt{q} = (y=0)$ and $H = \s{C}$. Then
\lrn terminates returning the D-sequent $S$ and clause $C$.

%
%
\subsection{D-sequent generation when a  D-sequent is activated}
In this subsection, we discuss the case where a D-sequent $S$ derived
earlier becomes active. This means that \Sub{C}{trg} is redundant in
the current subspace \pnt{a}. If the conditional of $S$ contains
assignments derived at the current decision level, \lrn generates a
new D-sequent whose conditional does not contain such assignments.
Consider the following example.  Let $Y = \s{y}$ and $F_1 \wedge F_2$
contain (among others) the clauses $C_1 = y \vee x_1$ and $C_2 = y
\vee x_2$.  Suppose that clause $C_3$ of $F_1 \wedge F_2$ is the
current target clause.  Suppose that the D-sequent $S$ equal to
\ods{q}{H}{C_3} was derived earlier where $\pnt{q} = (x_1=1,x_2=1)$
and $H = \emptyset$.

Assume that \prd made the decision assignment \mbox{$y=0$}. After
running \ti{BCP}, the assignments $x_1=1$ and $x_2=1$ are derived from
$C_1$ and $C_2$ respectively, which activates the D-sequent $S$. Note
that the conditional \pnt{q} of $S$ contains assignments derived at
the current level. So \lrn generates a new D-sequent as follows.
First $S$ is joined with the D-sequent $S'$ equal to \ods{q'}{H'}{C_3}
at variable $x_2$ where $\pnt{q'}=(y=0,x_2=0)$, $H' = \s{C_2}$. This
D-sequent states the redundancy of $C_3$ in the subspace where $C_2$
is falsified.  The resulting D-sequent $S$ is equal to \ods{q}{H}{C_3}
where $\pnt{q}=(y=0,x_1=1)$ and $H=\s{C_2}$.

Then $S$ is joined with the D-sequent $S''$ equal to
\ods{q''}{H''}{C_3} at variable $x_1$ where $\pnt{q''}=(y=0,x_1=0)$,
$H'' = \s{C_1}$. This D-sequent states the redundancy of $C_3$ in the
subspace where $C_1$ is falsified.  The resulting D-sequent $S$ is
equal to \ods{q}{H}{C_3} where $\pnt{q}=(y=0)$ and
$H=\s{C_1,C_2}$. The conditional of $S$ does not contain assignments
derived at the current decision level.  So $S$ is the final D-sequent
returned by \lrn.

\section{Updating Stack Of Target Levels}
\label{app:trg_lvls}
In this appendix, we give an example of how the stack $T$ of target
levels is updated.  Let \prob{X}{F(X,Y)} be an \ecnf formula where $F
= C_1 \wedge \dots \wedge C_5$. Here \mbox{$C_1 = y \vee x_1$}, $C_2 =
\overline{x}_1 \vee x_2$, $C_3=\overline{x}_2 \vee x_3 \vee x_4$, $C_4
= \overline{y} \vee \overline{x}_3$, $C_5= x_3 \vee \overline{x}_4$
and $Y = \s{y}$.  Consider the PQE problem of taking $C_1$ out of the
scope of quantifiers. So $C_1$ is the primary target clause and,
originally, $T$ has only one target level consisting of $C_1$.

Suppose that \prd makes the assignment $y=0$. The \ti{BCP} procedure
finds out that $C_1$ became a unit clause and adds $x_1=1$ derived
from $C_1$ to the assignment queue $Q$. Since $Q$ does not have any
other assignments to make and $C_1$ is the current target clause
\Sub{C}{trg}, the $\mi{BCP}^*$ procedure is called (line 4 of
Fig.~\ref{fig:bcp}). It makes the assignment $x_1=1$ derived from
$C_1$ and creates a new top level of $T$. This level is specified by
the pair $(C_1,x_1)$ where $C_1$ is the key clause and $x_1$ is the
key variable of this level. The latter consists of $C_2$, the only
clause of $F$ resolvable with $C_1$ on $x_1$. Then $\mi{BCP}^*$
derives $x_2=1$ from $C_2$ and creates a new top level of $T$
specified by the pair $(C_2,x_2)$. This level consists of $C_3$, the
only clause of $F$ resolvable with $C_2$ on $x_2$.

At this point, $\mi{BCP}^*$ runs out of unit clauses and returns to
\ti{BCP}. Then \ti{BCP} uses the top level of $T$ to pick the next
target clause $T$.  Since the top level of $T$ consists only of $C_3$,
the latter is chosen as \Sub{C}{trg} (line 5) and \ti{BCP} breaks the
while loop (line 8).

\section{Special Backtracking}
\label{app:spec_bcktr}
In this appendix, we discuss backtracking performed by \prd when $T$
contains secondary targets. Let us continue considering the example of
Appendix~\ref{app:trg_lvls}. After picking $C_3$ as the current target
clause, \ti{BCP} finds out that $C_3$ is blocked (line 18 of
Fig.~\ref{fig:bcp}) at variable $x_3$. Indeed, $C_4$ is the only
clause with $\overline{x}_3$ and it is satisfied by $y=0$. So \ti{BCP}
returns the backtracking condition \ti{BlkTrg} (line 19). Then \prd
learns an atomic D-sequent $S$ of the third kind (line 11 of
Fig.~\ref{fig:prv_red}) equal to \ods{q}{H}{C_3} where $\pnt{q} =
(y=0)$, $H = \emptyset$.

Since $C_3$ is a secondary target, \prd skips the regular backtracking
part (lines 14-17) and goes to the third part of the while loop (lines
18-25).  Recall that the current assignment \pnt{a} is
$(y=0,x_1=1,x_2=1)$ and $T$ consists of three target levels. The
bottom level of $T$ consists of the primary target $C_1$. Then next
level is specified by the pair $(C_1,x_1)$ and the top level of $T$ is
specified by the pair $(C_2,x_2)$.

At this point, \prd calls the special backtracking procedure
\ti{SpecBcktr} (line 18).  Although the conditional of $S$ contains
only assignment to $y$, \ti{SpecBcktr} cannot undo assignments $x_1=1$
and $x_2=1$ for the reason explained in
Subsection~\ref{ssec:spec_bcktr}. The clause $C_3$ (whose redundancy
is stated by $S$) became a secondary target \ti{only} after the
assignment $x_2=1$ was made. Since there is no conflict,
\ti{SpecBcktr} cannot backtrack past $x_2=1$ (i.e. the ``point of
origin''). So, \ti{SpecBcktr} terminates without changing \pnt{a}.

Since the conditional of $S$ does not contain any assignments made
after $x_2=1$, the redundancy of $C_3$ is proved up to the point of
origin. So \prd calls \ti{NewTrg} to pick a new target among the
clauses of the top level of $T$ (line 22). As we mentioned earlier,
the current top level of $T$ consists only of $C_3$. So no new target
can be found. As we mentioned in Subsection~\ref{ssec:spec_bcktr},
this means that $C_2$, the key clause of the top level of $T$, is
blocked at variable $x_2$. (Because the only clause of $F$ resolvable
with $C_2$ on $x_2$ is proved redundant.)  So, \ti{NewTrg} generates
an atomic D-sequent $S'$ of the third kind stating the redundancy of
$C_2$.  This D-sequent is equal to \ods{q'}{H'}{C_2} where
$\pnt{q'}=(y=0)$ and $H' = \emptyset$. Then \ti{NewTrg} eliminates the
current top level of $T$ restoring the clause $C_3$ back into the
formula $F$ (i.e. treating it as present in formula $F$). Besides,
\ti{NewTrg} unassigns $x_2$.

Now, \ti{NewTrg} tries to find a target clause in the \ti{new} top
level of $T$ specified by the pair $(C_1,x_1)$. Since the only clause
of $F$ resolvable with $C_1$ on $x_1$ (i.e. $C_2$) is proved redundant
in the subspace \pnt{a}, \ti{NewTrg} repeats the actions described
before.  First, it derives an atomic D-sequent $S''$ of the third kind
stating the redundancy of $C_1$. Here $S''$ is equal to
\ods{q''}{H''}{C_2} where $\pnt{q''}=(y=0)$ and $H'' =
\emptyset$. Then it eliminates the top level of $T$ restoring $C_2$
back into formula $F$ and unassigning $x_1$.

Now, $T$ is reduced to the primary target level (containing clause
$C_1$).  \ti{NewTrg} terminates returning $C_1$ as \Sub{C}{trg} and
$S''$ as D-sequent $S$ (line 22). Since $C_1$ is the primary target,
\prd checks if the conditional of $S$ is empty (line 24). Since it is
not, the redundancy of $C_1$ is proved only in the subspace \pnt{a}
that is currently equal to $(y=0)$. Since $S$ is unit in the subspace
\pnt{a}, \prd adds the assignment $y=1$ derived from $S$ to the
assignment queue $Q$ (line 25). Then \prd starts a new iteration of
the while loop.

\section{Correctness of \Apqe}
\label{app:sound_compl}
In this appendix, we give an informal proof that \Apqe is sound and
complete.
%
%
\subsection{\Apqe is sound}
\label{ssec:sound}
Let \prob{X}{F_1(X,Y) \wedge F_2(X,Y)} be an \ecnf formula. Suppose
that \Apqe is used to take $F_1$ out of the scope of
quantifiers. Assume, for the sake of simplicity, that every clause of
$F_1$ is an $X$-clause. In its operation, \Apqe generates new clauses
obtained by resolving clauses of $F_1 \wedge F_2$ and thus implied by
$F_1 \wedge F_2$. The final formula produced by \Apqe can be
represented as \prob{X}{F} where $F = \Sup{F_1}{ini} \wedge F^*_1
\wedge F^{**}_1 \wedge \Sup{F_2}{ini} \wedge F^*_2$.  Here
\Sup{F_1}{ini} and \Sup{F_2}{ini} denote the initial versions of $F_1$
and $F_2$ respectively.  The formula $F^*_1(Y)$ denotes the derived
clauses depending only on variables of $Y$ whose generation involved
clauses of \Sup{F_1}{ini} and/or their descendants. The formula
$F^{**}_1(X,Y)$ denotes the derived $X$-clauses whose generation
involved clauses of \Sup{F_1}{ini} and/or their descendants. The
formula $F^*_2(X,Y)$ denotes the derived clauses whose generation
involved \ti{only} clauses of \Sup{F_2}{ini} and/or their descendants.

For every clause $C$ of $\Sup{F_1}{ini} \wedge F^{**}_1$, \Apqe calls
\prd that generates a D-sequent $S$ equal to \ods{q}{H}{C} where
$\pnt{q} = \emptyset$ and $H \subseteq (F \setminus \s{C})$.  The
D-sequent $S$ states redundancy of $C$ in formula \prob{X}{F}. This
D-sequent is correct due to correctness of the atomic D-sequents and
the join operation and due to Proposition~\ref{prop:dseq_add_cls} of
Appendix~\ref{app:gen_new_dseqs}. The D-sequents for the clauses of
$\Sup{F_1}{ini} \wedge F^{**}_1$ are derived by \Apqe one by one in
some order. That is these D-sequents are \ti{consistent} (see
Definition~\ref{def:cons_dseqs}). So, one can claim that
\prob{X}{\Sup{F_1}{ini} \wedge \Sup{F_2}{ini}}~$\equiv$ \prob{X}{F}
$\equiv$ $F^*_1 \wedge$ \prob{X}{\Sup{F_2}{ini} \wedge F^*_2} $\equiv
F^*_1 \wedge$ \prob{X}{\Sup{F_2}{ini}}. Thus, $F^*_1(Y)$ is indeed a
solution to the PQE problem at hand.

%
%
\subsection{\Apqe is complete}
Assume that \Apqe generates only clauses that have not been seen
before.  Taking into account that \Apqe examines a finite search tree,
this means that \Apqe always terminates and thus is complete. The
problem however is that the version of \prd described in
Section~\ref{sec:Apqe} \ti{may} generate a duplicate of an $X$-clause
that is currently \ti{proved redundant}. So \prd and hence \Apqe may
loop.

To prevent looping, the current implementation of \prd does the
following.  (For the sake of simplicity, we did not discuss this part
of \prd in Section~\ref{sec:Apqe}.) Let (\pnt{y},\pnt{x}) be the
current assignment to $Y \cup X$ made by \prd before a duplicate of an
$X$-clause is generated. After a duplicate $C$ is generated, \prd
discards $C$ and backtracks to the last assignment to a variable of
$Y$ (and thus undoing all assignments to $X$). This is accompanied by
removing all secondary targets from the stack $T$. So, on completion
of backtracking, the primary target clause \Sub{C}{pr} is the current
target clause.  Then \prd generates a D-sequent stating the redundancy
of \Sub{C}{pr} in subspace \pnt{y} and keeps going as if \prd just
finished line 11 of Figure~\ref{fig:prv_red}.

To generate the D-sequent above, \prd does the following.  First, \prd
runs an internal SAT-solver to check if formula $F$ (defined in the
previous subsection) is satisfiable in subspace \pnt{y}. If not, a
clause $C(Y)$ implied by $F$ is generated and added to $F$. Then an
atomic D-sequent of the second kind is generated stating the
redundancy of \Sub{C}{pr} in the subspace where $C(Y)$ is
falsified. Otherwise, \prd finds an assignment (\pnt{y},\pnt{x})
satisfying $F$. The existence of such an assignment means that
\Sub{C}{pr} is redundant in the subspace \pnt{y} without adding any
clauses. Then \prd generates a D-sequent \ods{y^*}{H}{\Sub{C}{pr}}
where $H = \emptyset$ and \pnt{y^*} $\subseteq \pnt{y}$ and
(\pnt{y^*},\pnt{x}) satisfies $F$. (In other words, \prd tries to
shorten the satisfying assignment to reduce the conditional of the
D-sequent constructed for \Sub{C}{pr})

\section{Reducing Size Of Structure Constraints}
\label{app:str_con}
In this appendix, we describe some methods for reducing the size of
structure constraints in D-sequents learned by \Apqe. Let $S$ be a
D-sequent \ods{q}{H}{C} stating redundancy of $C$ in \prob{X}{F_1(X,Y)
  \wedge F_2(X,Y)}.  A useful observation here is that one does not
need to keep a clause $B \in H$ if $\V{B} \subseteq Y$.  Indeed, \Apqe
proves redundancy only of $X$-clauses. So a clause $B(Y)$ is either
present in $H$ or is satisfied by the current assignment \pnt{a}. In
either case, $S$ can be safely reused in the subspace \pnt{a} (see
Proposition~\ref{prop:dseq_subsp}).

As we mentioned earlier, one can keep structure constraints small by
generating only D-sequents for the target clauses of the $k$ bottom
levels of the stack $T$ (see Section~\ref{sec:exper}). In this case,
the size of $H$ is limited by the total number of secondary target
clauses of levels $1,\dots,k$.  In particular, one can safely reuse
the D-sequents of the primary target clause (level 0 of $T$) without
computing structure constraints at all\footnote{\label{fn:str_con}
Let \ods{q}{H}{\Sub{C}{trg}} be a D-sequent $S$ where \Sub{C}{trg} is
the current target. Suppose $\pnt{q} \subseteq \pnt{a}$ holds
where \pnt{a} is the current subspace examined by \Apqe.
If \Sub{C}{trg} is the primary target, no $X$-clause of $H$ is a
secondary target (because the set of secondary targets is currently
empty). So a clause of $H$ is either present in $F_1 \wedge F_2$
or \ti{satisfied} by \pnt{a}. In either case, $S$ can be safely reused
in the subspace \pnt{a} (see Proposition~\ref{prop:dseq_subsp}).
}.

Another method is based on using the substitution operation (see
Appendix~\ref{app:gen_new_dseqs}). By repeatedly applying this
operation, one can reduce the structure constraint of a D-sequent $S$
to an empty set (which may increase the conditional of
$S$)~\cite{qe_learn}.

Finally, one can reduce the size of structure constraints by adding
new clauses. Consider the following example.  Suppose that
\prob{X}{F_1(X,Y) \wedge F_2(X,Y)} contains (among others) clauses
$C_1 = y \vee x_1$, $C_2 = \overline{x}_1 \vee x_2$, ..., $C_k =
\overline{x}_{k-1} \vee x_k$, $C_{k+1}=x_k \vee x_{k+1}$.  Suppose
$C_{k+1}$ is the current target clause. Suppose that $Y = \s{y}$ and
\prd made the assignment $y=0$. After running BCP, \prd derives
$x_2=1$ from $C_2$, $x_3=1$ from $C_3$ and so on until \mbox{$x_k=1$}
is derived from $C_k$. The latter assignment satisfies the target
clause $C_{k+1}$. In this case, in our current implementation, the
\lrn procedure generates the D-sequent $S$ equal to
\ods{q}{H}{C_{k+1}} where $\pnt{q}=(y=0)$ and $H =
\s{C_1,\dots,C_k}$. (This D-sequent is constructed as described in
Appendix~\ref{ssec:sat_trg} where one performs $k$ join operations at
variables $x_k,\dots,x_1$.) Note that no \ti{new} clauses are added
when building $S$. One can reduce the size of $H$ by generating a new
clause $C=y \vee x_k$ obtained by resolving clauses $C_k,\dots,C_1$ on
variables $x_k,\dots,x_1$. Adding $C$ to $F_1 \wedge F_2$ makes
$C_{k+1}$ redundant in subspace $y=0$. So one can derive the D-sequent
\ods{q}{H}{C_{k+1}} where $\pnt{q}=(y=0)$, $H = \s{C}$ stating redundancy of
$C_{k+1}$ in \prob{X}{C \wedge F_1 \wedge F_2}. Thus, one reduces the
size of the structure constraint $H$ at the expense of adding a new
clause.

\section{One More Experiment}
\label{app:exper}
In Section~\ref{sec:exper}, we describe some experiments with \Apqe.
In this appendix, we describe one more experiment.  Similarly to the
experiments of Section~\ref{sec:exper}, \Apqe stored/reused only
D-sequents of the primary target clauses. This could not affect the
results of Section~\ref{sec:exper} much because, for the problem we
considered there, generation of a secondary target clause was a rare
event.  This was not true for the problem considered in this
appendix where a very large number of secondary targets was
generated. Nevertheless, the experimental results presented in this
appendix show that reusing even a small fraction of D-sequents can be
beneficial.

In this section, we solve the PQE problem arising in the method of
equivalence checking introduced in~\cite{fmcad16}. (The main idea of
this method is sketched in Subsection~\ref{ssec:pqe_ec}.) Let
$N'(X',Y',z')$ and $N''(X'',Y'',z'')$ be single-output circuits to be
checked for equivalence. Let \Sub{T}{cut} specify the variables of a
cut in $N'$ and $N''$.  Let \Sub{F}{cut} specify the gates of $N'$ and
$N''$ located between the inputs and the cut. Let \Sub{W}{cut} denote
$\V{\Sub{F}{cut}} \setminus \Sub{T}{cut}$. Let $\mi{EQ}(X',X'')$ be a
formula evaluating to 1 iff $X'$ an $X''$ have the same assignments.
We consider the problem of taking $\mi{EQ}$ out of the scope of
quantifiers in \prob{\Sub{W}{cut}}{\mi{EQ} \wedge \Sub{F}{cut}}. That
is one needs to find a formula $G(\Sub{T}{cut})$ such that $G \wedge
\prob{\Sub{W}{cut}}{\Sub{F}{cut}}$ $\equiv$
\prob{\Sub{W}{cut}}{\mi{EQ} \wedge \Sub{F}{cut}}.

%
%
\begin{table}
  \small
  \caption{Computing cut constrains in equivalence checking. The time limit is
  set to 100 seconds}
\scriptsize
\begin{center}
\begin{tabular}{|p{40pt}|p{12pt}|p{22pt}|p{10pt}|p{14pt}|p{10pt}|p{14pt}|p{10pt}|} \hline
    name     &$k$& \multicolumn{2}{c|}{\apqe} & \multicolumn{2}{c|}{\Apqe}& \multicolumn{2}{c|}{\Apqe}  \\
          & &\multicolumn{2}{c|}{} & \multicolumn{2}{c|}{no learning} & \multicolumn{2}{c|}{limited learning}\\ \cline{3-8}
         & & \#dseqs      & time      &\#dseqs& time   &\#dseqs & time    \\
         &     & $\times 10^3$ &(s.)  & $\times 10^3$    & (s.)  & $\times 10^3$ & (s.) \\ \hline

kenoopp1       &  5    &  $>$22,500     &  $*$    &  4 &  0.3      &  2.9 &   \tb{0.2}     \\\hline
bj08autg3f1    &  50   &  424    &  0.8    &  15  &  0.6      &  7.2 &   \tb{0.3}     \\\hline
abp4pold       &  2    &  $>$18,786     &  $*$    &  25  &  0.9      & 7 &  \tb{0.2}    \\\hline
eijks838       &  5    &  302    &  \tb{0.6}    &  43  &  1.0      &  52  &   1.1     \\\hline
pdtpmstwo      &  3    &  1,345    &  2.1    &  310   &  5.1      &  96   &   \tb{1.9}    \\\hline
prodconspold4  &  15   &  89   &  0.2    &  1.7  &  \tb{0.02}    &  2  &   \tb{0.02}   \\\hline
kenoopp2       &  2    &  $>$12,985    &  $*$    &  4.4 &  1.3      &  3.3 &   \tb{1.1}    \\\hline
texastwoprocp1 &  12   &  465    &  0.8    &  68 &  1.0      &  17  &   \tb{0.3}    \\\hline
eijks953       &  3    &  331    &  2.0    &  1 &  0.06     &  0.6  &   \tb{0.05}   \\\hline
bj08amba2g1    &  3    &  2,414    &  6.0    &  3.5 &  0.6      &  1.7 &   \tb{0.3}    \\\hline
pdtvishuffman1 &  40   &  1,656    &  1.8    &  1 &  0.3      &  0.6 &   \tb{0.2}    \\\hline
pdtpmsheap     &  3    &  2,039    &  6.0    &  3.7 &  0.08     &  2.1 &   \tb{0.07}   \\\hline
pdtvistimeout1 &  16   &  22,902     &  39     &  8.9 &  0.03     &  8 &   \tb{0.02}   \\\hline
brpptimo       &  2    &  $>$16,841     &  $*$    &  9.1 &  1.0      &  4.1 &   \tb{0.4}    \\\hline

\end{tabular}                
\end{center}
\label{tbl:ec}
\vspace{-5pt}
\end{table}


In this experiment, we used HWMCC-10 benchmarks. Circuit $N'$ was
specified by the transition relation of a benchmark and $N''$ was
obtained from $N'$ by a logic optimization tool. A cut of $N'$ and
$N''$ was formed from gates located in $N'$ and $N''$ at topological
levels $\leq k$. (The set consisting only of gates of topological level
$k$, in general, does not form a cut.) The input variables have
topological level 0. Table~\ref{tbl:ec} shows results for a sample of
the set of benchmarks we tried.

The first column gives the name of a benchmark. The second column
gives the value of $k$ above. The following columns give the number of
generated D-sequents (in thousands) and the run time for the PQE
procedures we compared. In this experiment, we compared the same
procedures as in Tables~\ref{tbl:mc}. The results of
Table~\ref{tbl:ec} show that \Apqe with limited learning outperforms
\apqe and \Apqe without learning.

\end{document}